\title{Decomposition Polygons into Fat Components} %TODO Please add
\author{Maike Buchin}{Ruhr University Bochum, Germany}{maike.buchin@rub.de}{}{}%TODO mandatory, please use full name; only 1 author per \author macro; first two parameters are mandatory, other parameters can be empty. Please provide at least the name of the affiliation and the country. The full address is optional
\author{Leonie Selbach}{Ruhr University Bochum, Germany}{leonie.selbach@rub.de}{}{}
\authorrunning{M. Buchin and L. Selbach} %TODO mandatory. First: Use abbreviated first/middle names. Second (only in severe cases): Use first author plus 'et al.'
\keywords{Polygon decomposition, fatness, aspect ratio} %TODO mandatory; please add comma-separated list of keywords
\begin{document}

\maketitle

%TODO mandatory: add short abstract of the document
\begin{abstract}
We study the problem of decomposing (i.e. partitioning or covering) polygons into components that are $\alpha$-fat, which means that the aspect ratio of each subpolygon is at most $\alpha$. We consider decompositions without Steiner points. We present a polynomial-time algorithm for simple polygons that finds the minimum $\alpha$ such that an $\alpha$-fat partition exists. Furthermore, we show that finding an $\alpha$-fat partition or covering with minimum cardinality is NP-hard for polygons with holes. 
\end{abstract}

\section{Introduction}
A decomposition of a polygon $P$ is a set of subpolygons whose union is exactly $P$. If the subpolygons are not allowed to overlap, the set is a \emph{partition} of $P$. Otherwise, we call the set a \emph{covering}. Here, we consider decompositions without Steiner points. Thus, a polygon is decomposed by adding diagonals between its vertices. Polygon decomposition problems arise in many theoretical and practical applications and can be categorized with regard to the type of subpolygon that is used~\cite{keil2000polygon}. 
Our research is motivated by a bioinformatical application, namely the processing of tissue samples for molecular analysis using laser capture microdissection (LCM). The size and shape of the tissue fragments are two of the main factors that affect the success of the dissection. The fragmentation of tissue samples into feasible regions can be modeled as a polygon decomposition problem~\cite{selbach_et_al:LIPIcs:2020:12802}. Preliminary results show that
fatness is a suitable shape criterion for this practical application, as specifically round shapes are desirable (see Fig.~\ref{fig:example}). Thus, we study decompositions into fat components.\par
\begin{figure}[b]
    \centering
    \includegraphics[width=11cm]{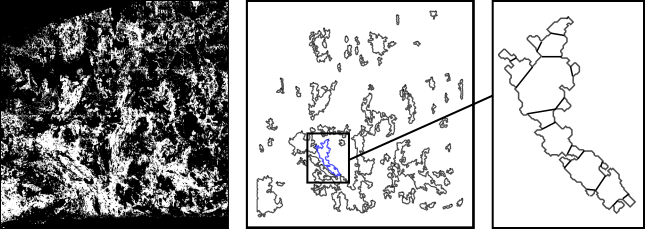}
    \caption{Fat decomposition of one connected component in a preprocessed tissue sample.}
    \label{fig:example}
\end{figure}
A polygon $P$ is called \emph{$\alpha$-fat} if its aspect ratio (AR) is at most $\alpha$. There are different definitions for the aspect ratio and in this paper we consider the following two (see Fig.~\ref{fig:partition_example}):
\begin{description}
    \item[square-fatness:] $AR_{square}$ = ratio between the side length of the smallest axis-parallel square containing $P$ and side length of the largest axis-parallel square contained in $P$~\cite{katz19973}.
    \item[disk-fatness:] $AR_{disk}$ = ratio between the diameter of the smallest circle enclosing $P$ (minimum circumscribed circle or MCC) and the diameter of the largest circle enclosed in $P$ (maximum inscribed circle or MIC)~\cite{damian2004computing}.
\end{description}
A polygon $P$ is called \emph{$\alpha$-small} if the side length of the enclosing square or respectively the diameter of the enclosing circle is at most $\alpha$.
The \emph{minimum $\alpha$-fat partition} (or \emph{covering}) \emph{problem} is finding a partition (or covering) with minimum cardinality such that every subpolygon is $\alpha$-fat. We have analog problems with $\alpha$-smallness instead of $\alpha$-fatness.\par 
\begin{figure}[t]
	
	\centering
	\begin{subfigure}[t]{0.45\textwidth}
	    \centering
		\includegraphics[width=0.8\textwidth]{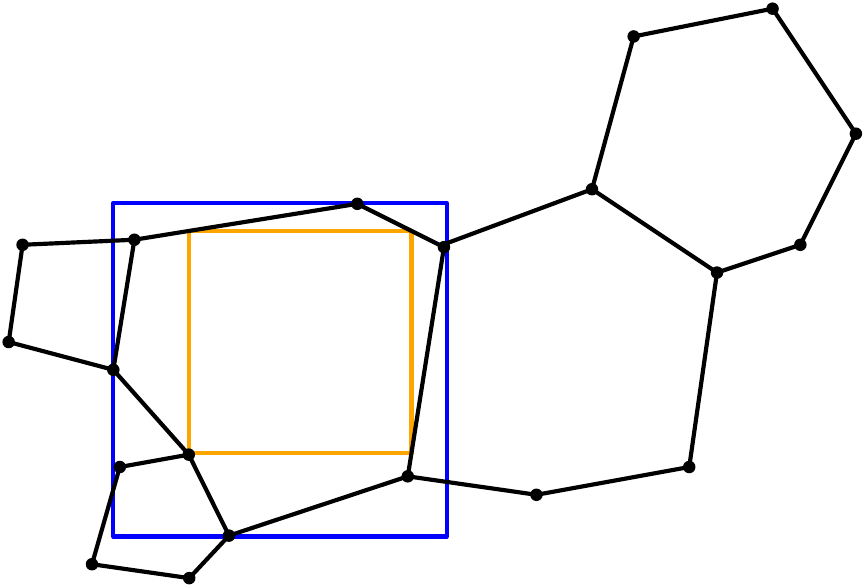}
		\caption{square-fatness}
		\label{fig:square_fat}
	\end{subfigure}\qquad
	\begin{subfigure}[t]{0.45\textwidth}
	    \centering
		\includegraphics[width=0.8\textwidth]{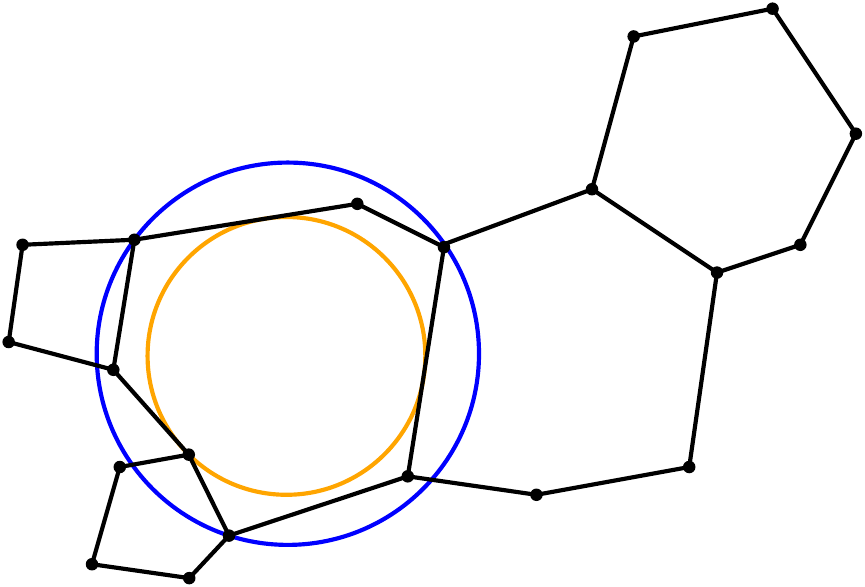}
		\caption{disk-fatness}
		\label{fig:disk_fat}
	\end{subfigure}
	\caption{Comparison of fatness definitions. The partition is $1.5$-fat in (a) and $1.4$-fat in (b).}
	\label{fig:partition_example}
\end{figure}

An overview of related results is presented in Table~\ref{tab:related}. Worman showed that the minimum $\alpha$-small partition problem as well as the covering problem are NP-hard for polygons with holes~\cite{worman2003decomposing}. Square-smallness was used for the reduction but with some adjustments the construction holds for disk-smallness as well (see appendix). For all following results disk-smallness and -fatness was used.
Damian and Pemmaraju showed that the minimum $\alpha$-small partition problem is polynomial-time solvable for simple polygons and that a faster 2-approximation algorithm exists~\cite{damian2004computing}. Additionally, the authors presented an approximation algorithm for convex polygons. Damian proved that the minimum $\alpha$-fat partition problem can be solved in polynomial time for simple polygons and conjectured that this problem is NP-hard for polygons with holes~\cite{damian2004exact}. The \emph{min-fat partition problem} is finding the smallest $\alpha$ for which an $\alpha$-fat partition exists. Solving the min-fat partition problem was stated as an open problem by Damian~\cite{CCCG2002Open}. This problem is of particular interest for us, as the corresponding partition is the most suitable for our practical application.\par 
\begin{table}[]
\caption{Overview of results for fat decomposition problems using different smallness and fatness definitions. Results marked with * are presented in this paper.}
\label{tab:related}
\begin{tabular}{l|ll}
Problems                                         & Simple polygons             & Polygons with holes \\ \hline
Minimum $\alpha$-small partition &   $\mathcal{O}(n^3m)$ (disk)~\cite{damian2004computing}                         & NP-hard (square~\cite{worman2003decomposing} and disk*)           \\
Minimum $\alpha$-small covering      & ?                           & NP-hard   (square~\cite{worman2003decomposing} and disk*)           \\
Minimum $\alpha$-fat partition   & $\mathcal{O}(n^4m^3)$ (disk)~\cite{damian2004exact}                            & NP-hard  (square and disk)*         \\
Minimum $\alpha$-fat covering       & ?                           & NP-hard (square and disk)*         \\
Min-fat partition                & $\mathcal{O}(n^3m^5\log n)$ (disk)* &     ?               
\end{tabular}
\end{table}
This paper includes two main results. In Section~\ref{sec:MinFatPartition} consider the min-fat partition problem using disk-fatness and present a polynomial-time algorithms for simple polygons. In Section~\ref{sec:minimum alpha fat decomposition} we consider the minimum $\alpha$-fat partition and covering problem. We confirm the conjecture that these problems are NP-hard for polygons with holes and present the two reductions for square- as well as disk-fatness. Furthermore, we present the NP-hardness reduction of the minimum $\alpha$-small partition problem for disk-smallness in Appendix~\ref{app:disk reduction}.\par  

\section{Min-fat partition problem for simple polygons}
\label{sec:MinFatPartition}
For our application, the main goal is to minimize tissue loss due to unsuccessful dissections and the shape of the fragments is one of the key factors. Our goal is to find an optimal partition of a given polygon such that the largest aspect ratio $AR$ (regarding disk-fatness) of any subpolygon is minimized. The value of the largest $AR$ in an optimal partition equals the desired $\alpha$ in the min-fat partition problem. With our algorithm, we extend the method of Damian~\cite{damian2004exact} for this optimization problem. \par
Let $P$ be a simple polygon with vertices labeled from $1$ to $n$ counterclockwise. A diagonal $(i,j)$ is a line segment that connects two vertices $i$ and $j$ and does not intersect the outside of $P$. Let $G(P)$ be the visibility graph of $P$ consisting of the $n$ vertices of $P$ and $m$ diagonals. We define $S$ as the set consisting of all vertices and edges of $G(P)$. For each diagonal $(i,j)$ with $i<j$, let $P_{i,j}$ be the subpolygon with vertices $\{i,i+1,\ldots,j\}$ (see Fig.~\ref{fig:polygon}). To solve the min-fat partition problem, we compute an optimal partition $Z_{i,j}$ for each $P_{i,j}$. This can be done iteratively. Let $Q$ be the polygon in $Z_{i,j}$ adjacent to $(i,j)$. Note that the vertices of $Q$ induce a path $q$ from $i$ to $j$ in the visibility graph and we have $Z_{i,j} = \bigcup_{(k,l)\in q}Z_{k,l} \cup Q$ (see Fig.~\ref{fig:path}). The idea is to find an optimal partition of $P_{i,j}$ by computing the optimal path $q$. We define edge weights $w(i,j)$ as the value of an optimal partition $Z_{i,j}$ of $P_{i,j}$:
\begin{align*}
    w(i,j)=\max_{P'\in Z_{i,j}}AR(P') = \max\{ \max_{(k,l)\in q}w(k,l),AR(Q)\}
\end{align*}
for $AR(Q)=d(\text{MCC})/d(\text{MIC})$ being the ratio between the diameters $d(\cdot)$ of the minimum circumscribed circle MCC and maximum inscribed circle MIC of $Q$. If $j$ is equal to $i+1$, the partition $Z_{i,i+1}$ is empty and we set $w(i,i+1)=0$. Otherwise, $w(i,j)$ equals the value of the largest AR in an min-fat partition of $P_{i,j}$.\par 
\begin{figure}[b]
	
	\centering
	\begin{subfigure}[t]{0.3\textwidth}
	    \centering
		\includegraphics[width=\textwidth]{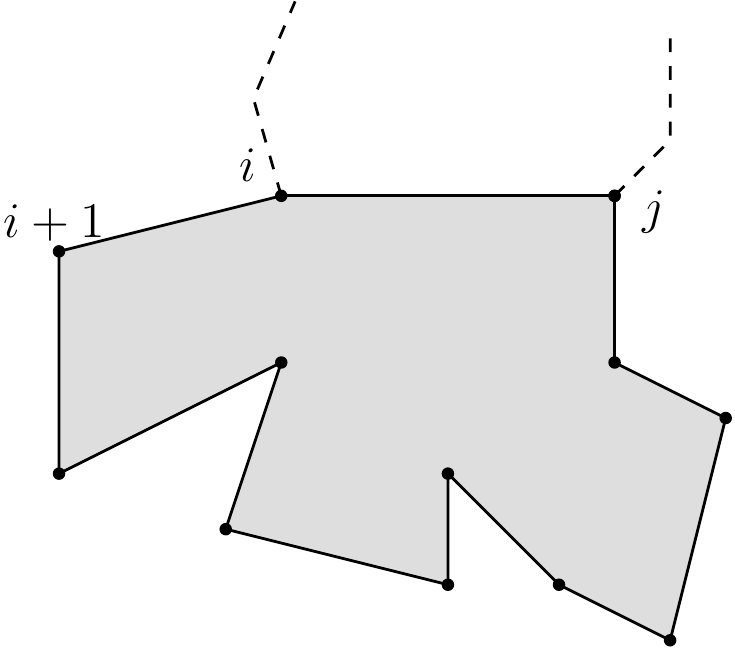}
		\caption{Polygon $P_{i,j}$.}
		\label{fig:polygon}
	\end{subfigure}\hfill
	\begin{subfigure}[t]{0.3\textwidth}
	    \centering
		\includegraphics[width=\textwidth]{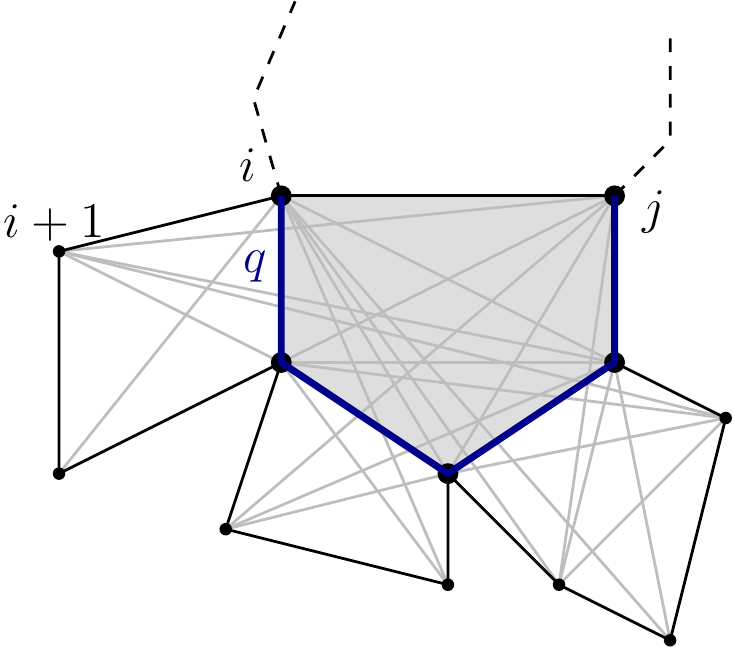}
		\caption{Polygon $Q$.}
		\label{fig:path}
	\end{subfigure}\hfill
	\begin{subfigure}[t]{0.3\textwidth}
	    \centering
		\includegraphics[width=\textwidth]{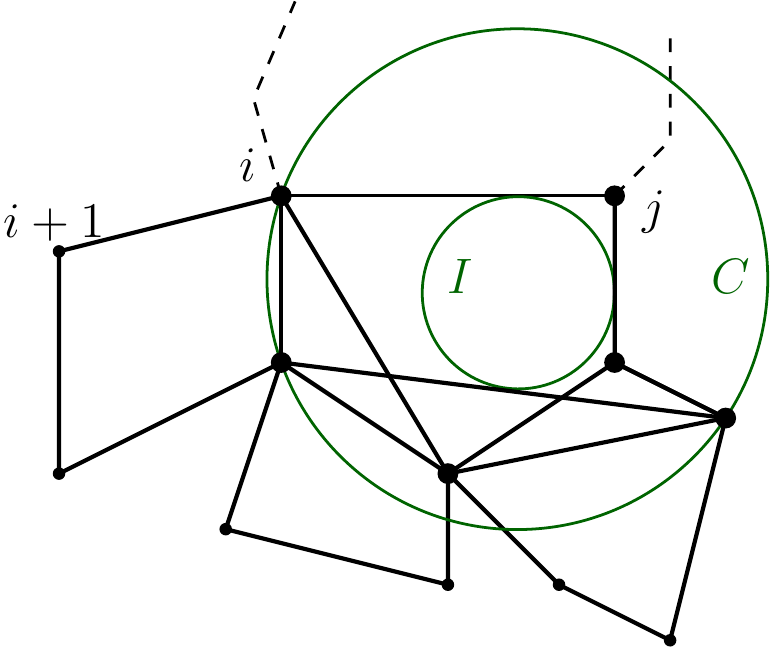}
		\caption{Graph $G_{i,j}^{(C,I)}$.}
		\label{fig:reducedVisGraph}
	\end{subfigure}
	\caption{In (a): Subpolygon $P_{i,j}$. In (b): Polygon $Q$ (gray) in an optimal partition of $P_{i,j}$ induced by a path $q$ (blue edges) in the visibility graph. In (c): Reduced visibility graph $G_{i,j}^{(C,I)}$ (fat edges) for a pair of circles $(C,I)$.}
	\label{fig:Graph}
\end{figure}
However, the computation of $w(i,j)$ presents the following problem: Finding the path $q$ and its correct edges requires knowledge about the resulting polygon $Q$ and its aspect ratio, which is not available beforehand. Therefore, we compute paths on different reduced graphs that ensure that the aspect ratio of each possible polygon is below a certain value and then choose the best one.  
For this, we consider all pairs of circles $(C,I)$ such that the following properties (P) hold: 
\begin{itemize}
    \item $(i,j)$ lies completely inside of $C$ and outside of $I$,
    \item $I$ is tangent to 3 elements in $S$,
    \item and $C$ either touches 3 vertices of $P$ or its diameter connects 2 vertices.
\end{itemize}
 Note that for any subpolygon the pair (MCC,MIC) fulfills these properties. For each $(C,I)$, let $G_{i,j}^{(C,I)}$ be the subgraph of $G(P)$ consisting of edges that lie outside of $I$ and inside of $P_{i,j}$ and $C$ (see Fig.~\ref{fig:reducedVisGraph}).\par 
We can compute the weights $w(i,j)$ by using dynamic programming with increasing values of $j-i$. For each pair of circles $(C,I)$ that fulfill (P), we compute a corresponding weight $W(C,I)$ and use those values to determine an edge weight $w'(i,j)$ as follows:
\begin{align*}
    W(C,I) &= \min_{q \in G_{i,j}^{(C,I)}} \max\{ \max_{(k,l)\in q} w'(i,j), d(C)/d(I)\}\\
    w'(i,j) &= \min_{(C,I)\text{ fulfilling (P)}} W(C,I)
\end{align*}
The weights of all edges except for $(i,j)$ have already been computed. We search in $G_{i,j}^{(C,I)}$ for the path $q$ such that the value $\max\{\max_{(k,l)\in q}w'(k,l), d(C)/d(I)\}$ is minimized. We denote this optimal value as $W(C,I)$. Over all possible combinations of circles, we search for the pair $(C,I)$ with minimum $W(C,I)$ and set $w'(i,j)=W(C,I)$. If no pair of circles exist, we set the weight $w'(i,j)=0$. We can show by induction that $w'(i,j)$ is actually equal to the largest aspect ratio in the corresponding partition and that this partition is indeed optimal and hence $w'(i,j)=w(i,j)$.

\begin{restatable}{lemma}{weight}\label{lem:weight}
For an edge $(i,j)$ in the visibility graph $G(P)$ with $j\neq i+1$, let $w'(i,j)$ be the computed weight and $Z_{i,j}$ the corresponding partition. Then, $w'(i,j)=\max_{P'\in Z_{i,j}}AR(P')$.
\end{restatable}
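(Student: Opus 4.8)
The plan is to prove the statement by induction on $j-i$, mirroring the dynamic-programming order in which the weights $w'(i,j)$ are computed. The base case is $j = i+2$: here $P_{i,j}$ is a triangle, the only ``partition'' is $P_{i,j}$ itself, the reduced visibility graph contains only the direct path, and one checks directly that the pair $(C,I) = (\mathrm{MCC}, \mathrm{MIC})$ of the triangle is among the pairs fulfilling (P), so $w'(i,j) = d(\mathrm{MCC})/d(\mathrm{MIC}) = AR(P_{i,j})$. For the inductive step we fix an edge $(i,j)$ with $j - i \geq 3$ and assume the claim holds for every edge $(k,l)$ with $l - k < j - i$; in particular $w'(k,l) = \max_{P' \in Z_{k,l}} AR(P')$ for all such edges, and these are genuinely optimal partitions of the corresponding subpolygons.

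Next I would establish the two inequalities separately. For $w'(i,j) \geq \max_{P' \in Z_{i,j}} AR(P')$: by construction $w'(i,j) = W(C,I)$ for some pair $(C,I)$ achieving the minimum, $W(C,I) = \max\{\max_{(k,l) \in q} w'(k,l),\, d(C)/d(I)\}$ for the optimal path $q$ in $G_{i,j}^{(C,I)}$, and $Z_{i,j} = \bigcup_{(k,l) \in q} Z_{k,l} \cup Q$ where $Q$ is the polygon bounded by $q$ and $(i,j)$. Every $P' \in Z_{i,j}$ is either some $P' \in Z_{k,l}$ — whose $AR$ is at most $w'(k,l) \leq \max_{(k,l)\in q} w'(k,l)$ by the inductive hypothesis — or is $Q$ itself, and here the key geometric point is that since every edge of $q$ lies inside $C$ and outside $I$, the polygon $Q$ is contained in $C$ and contains no point of $I$, hence its MCC has diameter $\leq d(C)$ and its MIC has diameter $\geq d(I)$, giving $AR(Q) \leq d(C)/d(I)$. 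Thus $\max_{P' \in Z_{i,j}} AR(P') \leq W(C,I) = w'(i,j)$.

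For the reverse inequality $w'(i,j) \leq \max_{P' \in Z_{i,j}} AR(P')$, and simultaneously for optimality of $Z_{i,j}$, I would argue that $w'(i,j)$ is a lower bound on the value of any partition of $P_{i,j}$. Take any partition $Z^*$ of $P_{i,j}$; let $Q^*$ be its piece adjacent to $(i,j)$, with boundary path $q^*$, and let $(C^*, I^*) = (\mathrm{MCC}(Q^*), \mathrm{MIC}(Q^*))$. One checks that $(C^*, I^*)$ fulfills the properties (P) — a standard fact about minimum circumscribed and maximum inscribed circles of a polygon, already invoked in the excerpt's remark ``for any subpolygon the pair (MCC,MIC) fulfills these properties'' — and that every edge of $q^*$ lies inside $C^*$, outside $I^*$, inside $P_{i,j}$, hence $q^*$ is a path in $G_{i,j}^{(C^*,I^*)}$. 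Therefore $W(C^*, I^*) \leq \max\{\max_{(k,l) \in q^*} w'(k,l), d(C^*)/d(I^*)\}$, and by the inductive hypothesis each $w'(k,l)$ equals the optimal value for $P_{k,l}$, which is $\leq \max_{P' \in Z^*, P' \subseteq P_{k,l}} AR(P')$; combined with $d(C^*)/d(I^*) = AR(Q^*)$ this gives $w'(i,j) \leq W(C^*,I^*) \leq \max_{P' \in Z^*} AR(P')$. Applying this with $Z^* = Z_{i,j}$ yields the reverse inequality, and applying it with an arbitrary $Z^*$ shows $Z_{i,j}$ is optimal.

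The main obstacle I anticipate is the geometric containment argument in the forward inequality and the verification that $(\mathrm{MCC}(Q^*), \mathrm{MIC}(Q^*))$ satisfies (P): one must argue carefully that a polygon whose boundary edges all lie inside the disk $C$ and avoid the interior of the disk $I$ is itself nested between $I$ and $C$ (using that edges are chords, so convexity of the disks does the work for $C$; for $I$ one uses that $Q^*$ lies on one side of each boundary edge), and that the extremal circles of a polygon are pinned by three contacts or a diameter in exactly the way (P) demands. The dynamic-programming bookkeeping — that the weights of all sub-edges of $(i,j)$ are available when $(i,j)$ is processed, so the induction is well-founded — is routine given the ordering by increasing $j-i$.
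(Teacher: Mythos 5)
Your proof is correct and follows the same backbone as the paper's: induction on $j-i$, the same base case $j=i+2$, and the forward inequality $w'(i,j)\geq\max_{P'\in Z_{i,j}}AR(P')$ obtained from the geometric fact $AR(Q)\leq d(C)/d(I)$. Where you diverge is in the reverse inequality: the paper handles it by a case analysis on whether $w'(k',l')$ or $d(C')/d(I')$ dominates, resolving the hard subcase ($d(C')/d(I')>AR(Q')$) by a contradiction against the minimality of $W(C',I')$ using the pair $(\mathrm{MCC}(Q'),\mathrm{MIC}(Q'))$; you instead strengthen the induction hypothesis to include optimality of $Z_{k,l}$ and prove directly that $w'(i,j)\leq\max_{P'\in Z^*}AR(P')$ for \emph{every} partition $Z^*$ of $P_{i,j}$, then specialize to $Z^*=Z_{i,j}$. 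These are the same underlying idea (compare against the pair $(\mathrm{MCC},\mathrm{MIC})$ of the adjacent polygon, which fulfills (P) and admits the same path), but your packaging buys something: the general-$Z^*$ bound is exactly the paper's Lemma~2, so your single combined induction establishes both the weight identity and optimality at once, avoiding the case split. Note that for the specialization $Z^*=Z_{i,j}$ you do not actually need the optimality half of the induction hypothesis, since $Z_{i,j}$ restricted to $P_{k,l}$ \emph{is} $Z_{k,l}$; the optimality half is only needed for arbitrary $Z^*$. One small wording slip: in the forward inequality you write that $Q$ ``contains no point of $I$,'' but what you need (and correctly state in your final paragraph) is the nesting $I\subseteq Q\subseteq C$, so that $d(\mathrm{MIC}(Q))\geq d(I)$; the paper asserts this containment without proof as well, so you are at the same level of rigor there.
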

\begin{proof}
Let $(C',I')$ be the pair for circles, $q'$ the corresponding path in $G_{i,j}^{(C',I')}$ and $(k',l')$ the edge in $q'$ such that
\[
w'(i,j) = W(C',I') = \max\{\max_{(k,l)\in q'}w'(k,l),d(C')/d(I')\} = \max\{w'(k',l'),d(C')/d(I')\}.
\]
The computation induces a partition $Z_{i,j}$ in which we denote the polygon adjacent to $(i,j)$ by $Q'$. By induction over $j-i$, we show that $w'(i,j)=\max_{P'\in Z_{i,j}}AR(P')$.\\
For $j-i=2$, consider an edge $(i,j)=(i,i+2)$ in the visibility graph. There is only one possible pair $(C',I')$, which is equal to (MCC,MIC) of $P_{i,j}$. Thus, there is only one possible path $q'=\{(i,i+1),(i+1,i+2)\}\in G_{i,j}^{(C',I')}$. This path induces the partition $Z_{i,j}=\{P_{i,j}\}$. Therefore, we have $Q'=P_{i,j}$ and $d(C')/d(I') = AR(Q')$. Since $w'(i,i+1)=w'(i+1,i+2)=0$, we have
\[
w'(i,j) = \frac{d(C')}{
d(I')} = AR(Q') = \max_{P'\in Z_{i,j}}AR(P').
\]
Our induction hypothesis is that $w'(i,j)=\max_{P'\in Z_{i,j}}AR(P')$ for all $(i,j)$ such that $j-i\leq s$. Now, we consider an edge $(i,j)$ such that $j-i=s+1$.
Using the induction hypothesis (IH), we have the following:
\begin{align*}
 \max_{P'\in Z_{i,j}}AR(P') &\;\,= \max_{(k,l)\in q}\{\max_{P'\in Z_{k,l}}AR(P'),AR(Q)\}  \underset{(IH)}{=} \max\{w'(k',l'),AR(Q)\}
\end{align*}
Since $d(C')/d(I') \geq AR(Q')$, the inequality $w'(i,j) \geq \max_{P'\in Z_{i,j}}AR(P')$ holds. For the opposite inequality we consider different cases:
\begin{description}
    \item[Case 1:] $w'(k',l') \geq d(C')/d(I')$   ($\geq AR(Q')$).\\
    Obviously, we have $w'(i,j)=w'(k',l')=\max_{P'\in Z_{i,j}}AR(P')$.
    \item[Case 2:] $w'(k',l') < d(C')/d(I')$.\\
    Then, the weight $w'(i,j)$ is equal to $d(C')/d(I')$. 
    \begin{description}
        \item[a)] $d(C')/d(I') = AR(Q')$.\\
        We have $w'(i,j)=AR(Q')=\max_{P'\in Z_{i,j}}AR(P')$.
        \item[b)] $d(C')/d(I') > AR(Q')$.\\
        Let $(C'',I'')$ be pair of circles such that $C''$ is the MCC and $I''$ the MIC of $Q'$. During our computation, we consider $(C'',I'')$ and find the same path $q'$ in $G_{i,j}^{(C'',I'')}$. Thus, $W(C'',I'')\leq \max\{w'(k',l'),d(C'')/d(I'')\}=\max\{w'(k',l'),AR(Q')\}$. Since both $w'(k',l')< d(C')/d(I')$ and $AR(Q')<d(C')/d(I')$, it follows that $W(C'',I'') < W(C',I') = w'(i,j)$. This contradicts our assumption that $w'(i,j)$ is minimal. 
    \end{description}
\end{description}
With this, we showed that $w(i,j)=\max_{P'\in Z_{i,j}}AR(P')$ holds.
\end{proof}

\begin{restatable}{lemma}{optimalpartition}\label{lem:optimalpartition}
The computed partition $Z_{i,j}$ is an optimal partition of $P_{i,j}$, meaning that the largest aspect ratio of any subpolygon is minimized.
\end{restatable}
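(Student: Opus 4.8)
The plan is to prove, by induction on $j-i$, that the computed value equals the optimum, i.e.\ $w'(i,j)=w(i,j)$; since $w(i,j)$ is by definition the largest aspect ratio in an optimal partition of $P_{i,j}$ and Lemma~\ref{lem:weight} gives $w'(i,j)=\max_{P'\in Z_{i,j}}AR(P')$, this is exactly the statement that $Z_{i,j}$ is optimal. One inequality is immediate: $Z_{i,j}$ is a particular partition of $P_{i,j}$, so $w(i,j)\le\max_{P'\in Z_{i,j}}AR(P')=w'(i,j)$. Hence the real task is to show $w'(i,j)\le\max_{P'\in Z^*}AR(P')$ for an arbitrary partition $Z^*$ of $P_{i,j}$.

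For the base case $j-i=2$ the polygon $P_{i,j}$ is a triangle, its only partition is $\{P_{i,j}\}$, and there is nothing to prove (this coincides with the base case of Lemma~\ref{lem:weight}). For the inductive step fix $(i,j)$ with $j-i=s+1$, assume the claim for all edges with smaller index gap, and let $Z^*$ be an arbitrary partition of $P_{i,j}$. Let $Q^*$ be the subpolygon of $Z^*$ incident to the diagonal $(i,j)$. Its boundary, with $(i,j)$ removed, is a path $q^*$ from $i$ to $j$ whose edges are edges or diagonals of $P$, hence a path in $G(P)$; and, exactly as in the decomposition $Z_{i,j}=\bigcup_{(k,l)\in q}Z_{k,l}\cup Q$ used above, for each consecutive pair $(k,l)$ on $q^*$ the subpolygons of $Z^*$ lying inside $P_{k,l}$ form a partition $Z^*_{k,l}$ of $P_{k,l}$, and $Z^*=\bigcup_{(k,l)\in q^*}Z^*_{k,l}\cup\{Q^*\}$. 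Since $l-k\le s$, the induction hypothesis together with Lemma~\ref{lem:weight} gives $w'(k,l)=w(k,l)\le\max_{P'\in Z^*_{k,l}}AR(P')$, so
\begin{align*}
\max_{P'\in Z^*}AR(P')=\max\Bigl\{\max_{(k,l)\in q^*}\max_{P'\in Z^*_{k,l}}AR(P'),\,AR(Q^*)\Bigr\}\ \ge\ \max\Bigl\{\max_{(k,l)\in q^*}w'(k,l),\,AR(Q^*)\Bigr\}.
\end{align*}

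It remains to bound the right-hand side from below by $w'(i,j)$, and this is the one point that needs care. Let $C^*$ and $I^*$ be the MCC and the MIC of $Q^*$. Since, as already observed, the $(\mathrm{MCC},\mathrm{MIC})$ pair of any subpolygon of $P_{i,j}$ fulfils properties~(P), the pair $(C^*,I^*)$ is among those enumerated when computing $w'(i,j)$. Because $Q^*\subseteq P_{i,j}$, $Q^*\subseteq C^*$, and $I^*$ is inscribed in $Q^*$ (so its interior is disjoint from $\partial Q^*$), every edge of $q^*$ — being part of $\partial Q^*$ — lies inside $P_{i,j}$ and $C^*$ and outside $I^*$; thus $q^*$ is a path in the reduced graph $G_{i,j}^{(C^*,I^*)}$. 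Therefore
\begin{align*}
w'(i,j)\ \le\ W(C^*,I^*)\ \le\ \max\Bigl\{\max_{(k,l)\in q^*}w'(k,l),\,\frac{d(C^*)}{d(I^*)}\Bigr\}\ =\ \max\Bigl\{\max_{(k,l)\in q^*}w'(k,l),\,AR(Q^*)\Bigr\},
\end{align*}
and chaining this with the previous display yields $w'(i,j)\le\max_{P'\in Z^*}AR(P')$. As $Z^*$ was arbitrary, $w'(i,j)\le w(i,j)$, and with the reverse inequality noted at the outset we get $w'(i,j)=w(i,j)=\max_{P'\in Z_{i,j}}AR(P')$, so $Z_{i,j}$ is optimal. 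I expect the main obstacle to be precisely the middle claim: verifying that the $(\mathrm{MCC},\mathrm{MIC})$ pair of the subpolygon adjacent to $(i,j)$ in an \emph{arbitrary} competing partition is one of the circle pairs the algorithm considers, and that the corresponding boundary path survives in the associated reduced visibility graph; everything else is the inductive decomposition already exploited in Lemma~\ref{lem:weight} together with routine bookkeeping.
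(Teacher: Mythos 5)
Your proposal is correct and follows essentially the same route as the paper: induct on $j-i$, take the subpolygon $Q^*$ adjacent to $(i,j)$ in a competing partition, observe that its $(\mathrm{MCC},\mathrm{MIC})$ pair satisfies properties (P) so that the induced path $q^*$ survives in $G_{i,j}^{(C^*,I^*)}$ and is a candidate in the min-max computation, and combine $w'(i,j)\le W(C^*,I^*)$ with the induction hypothesis and Lemma~\ref{lem:weight}. You are somewhat more explicit than the paper about decomposing the competing partition into the pieces $Z^*_{k,l}$ and about verifying membership of $q^*$ in the reduced graph, but the argument is the same.
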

\begin{proof}
We show this by induction over $j-i$. For $j-i=1$, we consider an edge $(i,j)=(i,i+1)$. Thus, $Z_{i,j}$ is the empty set.
Our induction hypothesis is that the computed partition $Z_{i,j}$ is an optimal partition of $P_{i,j}$ for all $(i,j)$ such that $j-i \leq s$. Now, we consider an edge $(i,j)$ such that $j-i=s+1$. Let $Z_{i,j}^*$ be an optimal partition of $P_{i,j}$. Hence, $\max_{P'\in Z_{i,j}^*}AR(P')$ is minimal over all partitions. Let $Q^*$ be the polygon in $Z^*_{i,j}$ adjacent to $(i,j)$ and let $q^*$ be the path induced by $Q^*$. The path $q^*$ is contained in $G_{i,j}^{(C^*,I^*)}$ for $(C^*,I^*)=(\text{MCC},\text{MIC})$ of $Q^*$. Thus, $q^*$ is a candidate for the min-max path computed by our algorithm and we have $W(C^*,I^*)=\max\{\max_{(k,l)\in q^*}w'(k,l),AR(Q^*)\}$. Let $Z_{i,j}=\bigcup_{(k,l)\in q'}Z_{k,l}\cup Q$ be the partition computed by our algorithm and $(C',I')$ the corresponding circles. We have:
\begin{align*}
W(C',I') &\leq W(C^*,I^*) \\  
\Rightarrow  \max\{\max_{(k,l)\in q'}w'(k,l),d(C')/d(I')\} &\leq \max\{\max_{(k,l)\in q^*}w'(k,l),AR(Q^*)\} \\
\underset{Lemma~\ref{lem:weight}}{\Rightarrow} \max_{P'\in Z_{i,j}}AR(P') &\leq \max_{P'\in Z_{i,j}^*}AR(P').
\end{align*}
Since the maximal aspect ratio in $Z_{i,j}$ is not larger that the maximal aspect ration in the optimal partition, it follows that the computed partition is optimal.    
\end{proof}

\begin{restatable}{theorem}{minmaxfat}\label{theo:minmaxfat}
For a simple polygon $P$, the min-fat partition problem using disk-fatness can be solved in time $\mathcal{O}(n^3m^5\log n)$ with $n$ being the number of vertices of $P$ and $m$ being the number of edges in the visibility graph $G(P)$.
\end{restatable}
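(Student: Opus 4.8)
The plan is to get correctness essentially for free from Lemma~\ref{lem:weight} and Lemma~\ref{lem:optimalpartition}, and to spend the real effort on counting the objects the dynamic program touches. First I would note that the recursion is well founded: the edges $(i,j)$ are processed by increasing $j-i$, and every edge $(k,l)$ that can appear on a path $q$ from $i$ to $j$ inside $P_{i,j}$ has $i\le k<l\le j$ with $(k,l)\ne(i,j)$, hence $l-k<j-i$, so $w'(k,l)$ is already available. By Lemma~\ref{lem:weight}, $w'(i,j)$ is the largest aspect ratio of the induced partition $Z_{i,j}$; by Lemma~\ref{lem:optimalpartition}, $Z_{i,j}$ is optimal for $P_{i,j}$. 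Treating the boundary edge $(1,n)$ as a (degenerate) diagonal, $P_{1,n}=P$, so $w'(1,n)$ is exactly the minimum $\alpha$ for which an $\alpha$-fat partition of $P$ exists, and the partition itself is recovered by tracing back the chosen path and circles.

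For the running time I would separate three nested costs. There are $m$ edges $(i,j)$ to handle. For a fixed $(i,j)$ I bound the candidate pairs $(C,I)$ obeying (P): the inscribed circle $I$ is tangent to three elements of $S$, and since the boundary edges of $P$ already lie in $G(P)$ we have $m\ge n$ and $|S|=O(n+m)=O(m)$, so there are $O(m^3)$ choices for $I$, each triple of tangency features determining only $O(1)$ circles; the circumscribed circle $C$ either passes through three vertices of $P$ ($O(n^3)$ choices) or has two vertices as a diameter ($O(n^2)$ choices), i.e. $O(n^3)$ choices for $C$; so $O(n^3m^3)$ pairs per edge, the extra requirement that $(i,j)$ lie inside $C$ and outside $I$ being just an $O(1)$ filter that does not increase the count. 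For each surviving pair, constructing $G_{i,j}^{(C,I)}$ costs $O(1)$ per edge of $G(P)$ to test containment in $C$, exclusion from $I$, and containment in $P_{i,j}$, hence $O(m)$ in total (under the usual real-RAM model for geometric primitives), and finding the required min--max (bottleneck) path from $i$ to $j$ in a graph on $O(n)$ vertices and $O(m)$ edges costs $O(m\log n)$ via a Dijkstra-type relaxation using $\max$ in place of $+$. Multiplying the three factors gives $m\cdot O(n^3m^3)\cdot O(m\log n)=O(n^3m^5\log n)$.

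The step I expect to carry the weight is not the arithmetic above but the structural claim, already asserted before Lemma~\ref{lem:weight}, that for every subpolygon $Q$ that can arise the pair $(\mathrm{MCC},\mathrm{MIC})$ satisfies (P) --- this is what makes the finite candidate set both sufficient (for optimality, as used in Lemma~\ref{lem:optimalpartition}) and polynomially small. For $\mathrm{MCC}$ it is the classical fact that the smallest enclosing circle of a finite point set is pinned down by two or three of the points and passes through them, applied to the vertices of $Q$ (a subset of the vertices of $P$). For $\mathrm{MIC}$ one invokes that the center of a largest inscribed disk of a polygon lies on the medial axis, so the disk is tangent to three boundary features, which for $Q$ are vertices of $P$ and diagonals or edges of $P$, all members of $S$. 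Once this is settled, the remaining obstacles are pure bookkeeping: verifying $m\ge n$ so that $|S|=O(m)$, that each geometric predicate is $O(1)$, and that the bottleneck-path routine meets the stated bound.
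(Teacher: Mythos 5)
Your proposal is correct and follows essentially the same route as the paper's proof: iterate over the $m$ edges by increasing $j-i$, enumerate the $\mathcal{O}(n^3)$ candidates for $C$ and $\mathcal{O}(m^3)$ candidates for $I$, and run a $\mathcal{O}(m\log n)$ bottleneck-path (max-Dijkstra) computation per pair, with correctness delegated to Lemmas~\ref{lem:weight} and~\ref{lem:optimalpartition}. Your write-up is in fact more complete than the paper's, which asserts the circle counts and the fact that $(\mathrm{MCC},\mathrm{MIC})$ satisfies (P) without the justification you supply.
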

\begin{proof}
First, we have to compute the visibility graph of $P$ which takes $\mathcal{O}(n+m)$ time. For every edge $(i,j)$ in the visibility graph, we determine an optimal partition $Z_{i,j}$ by computing the optimal weight $w(i,j)$. For each $(i,j)$, we consider pairs of circles $(C,I)$. There are $\mathcal{O}(n^3)$ circles $C$ and $\mathcal{O}(m^3)$ circles $I$ to consider. Computing the optimal path in $G_{i,j}^{(C,I)}$ can be done by an adjusted version of Dijkstra's Algorithm and therefore takes $\mathcal{O}(m\log n)$ time. Thus, the overall runtime of the algorithm is $\mathcal{O}(n^3m^5\log n)$.
\end{proof}

\section{Minimum \texorpdfstring{$\alpha$}{alpha}-fat decomposition problems for polygons with holes}\label{sec:minimum alpha fat decomposition}
In this section, we consider the minimum $\alpha$-fat decomposition problems on polygons with holes. We denote the corresponding decision problem by \emph{decide $\alpha$-fat partition} (resp. \emph{covering}) \emph{problem}. That is, deciding whether there exists an $\alpha$-fat partition (resp. covering) with a given number of components. We show NP-hardness with a reduction from \emph{planar 3,4-SAT}. For a boolean formula $\phi$, let $G(\phi)=(V,E)$ be the graph with $V=U\cup C$ and $E=\{(u,c)\,\vert\, u\in U, c\in C, \text{$u$ or $\Bar{u}$ is a literal in $c$}\}$, where $U$ are the literals and $C$ the clauses. Planar 3,4-SAT is the problem of deciding if $\phi$ is satisfiable under the following three restrictions:
In conjunctive normal form, $\phi$ has exactly 3 literals per clause,
   each literal appears in at most 4 clauses,
   and the graph $G(\phi)$ is planar.
We construct a polygon representing the graph $G(\phi)$ that has an $\alpha$-fat partition of size $k$ if and only if $\phi$ is satisfiable. The value of $\alpha$ is fixed and $k$ is determined during this construction.\par The related result for $\alpha$-smallness was proven by Worman~\cite{worman2003decomposing}. In their construction only (orthogonal) subpolygons below a certain height and width are feasible, as only the side length of the enclosing square is bounded. However, with $\alpha$-fatness it is the aspect ratio that is bounded. Thus, the challenge is to find suitable subpolygons and a suitable $\alpha$ such that all necessary components can be constructed. The constructions differ depending on the definition of fatness that is applied.

\subsection{Reduction for square-fatness}\label{sec:square reduction}
The construction consists of three different polygon components: variable, wire and clause polygons. We set $\alpha=1.2$. All polygon components are orthogonal and at most $5$ units wide, thus the side length of the enclosing squares cannot exceed $6$. The \emph{variable polygon} is shown in Figure~\ref{fig:variable}. It has four terminals at which wires can be connected. This polygon can be minimally partitioned into 8 $\alpha$-fat subpolygons in two ways (see Fig.~\ref{fig:variable_TF}). These partitions represent the \texttt{True} and \texttt{False} assignment, which is carried to the corresponding clauses. \par 
\begin{figure}[b]
	
	\centering
	\begin{subfigure}[c]{0.47\textwidth}
		\includegraphics[width=\textwidth]{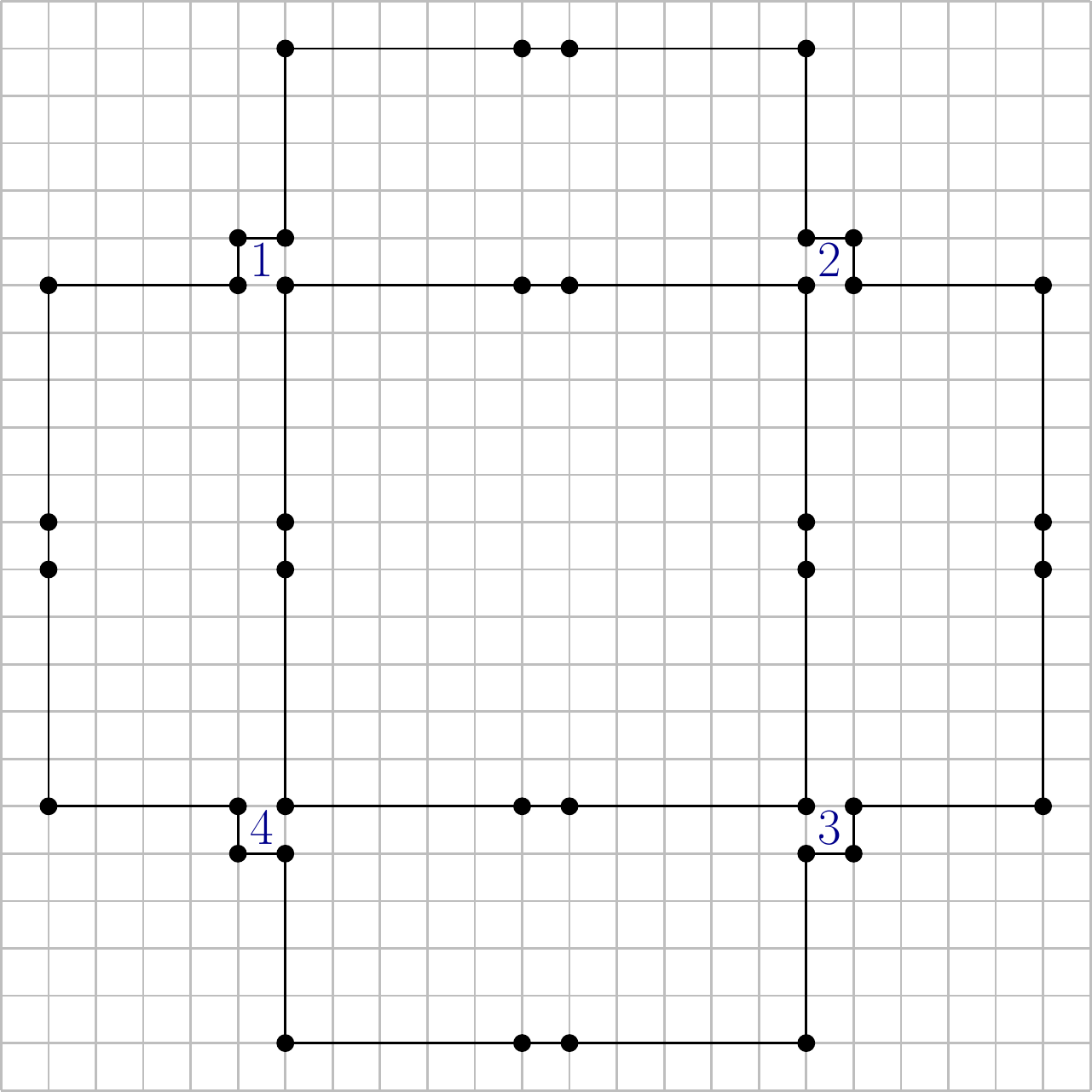}
		\caption{}
		\label{fig:variable}
	\end{subfigure}\qquad
	\begin{subfigure}[c]{0.47\textwidth}
		\includegraphics[width=0.93\textwidth]{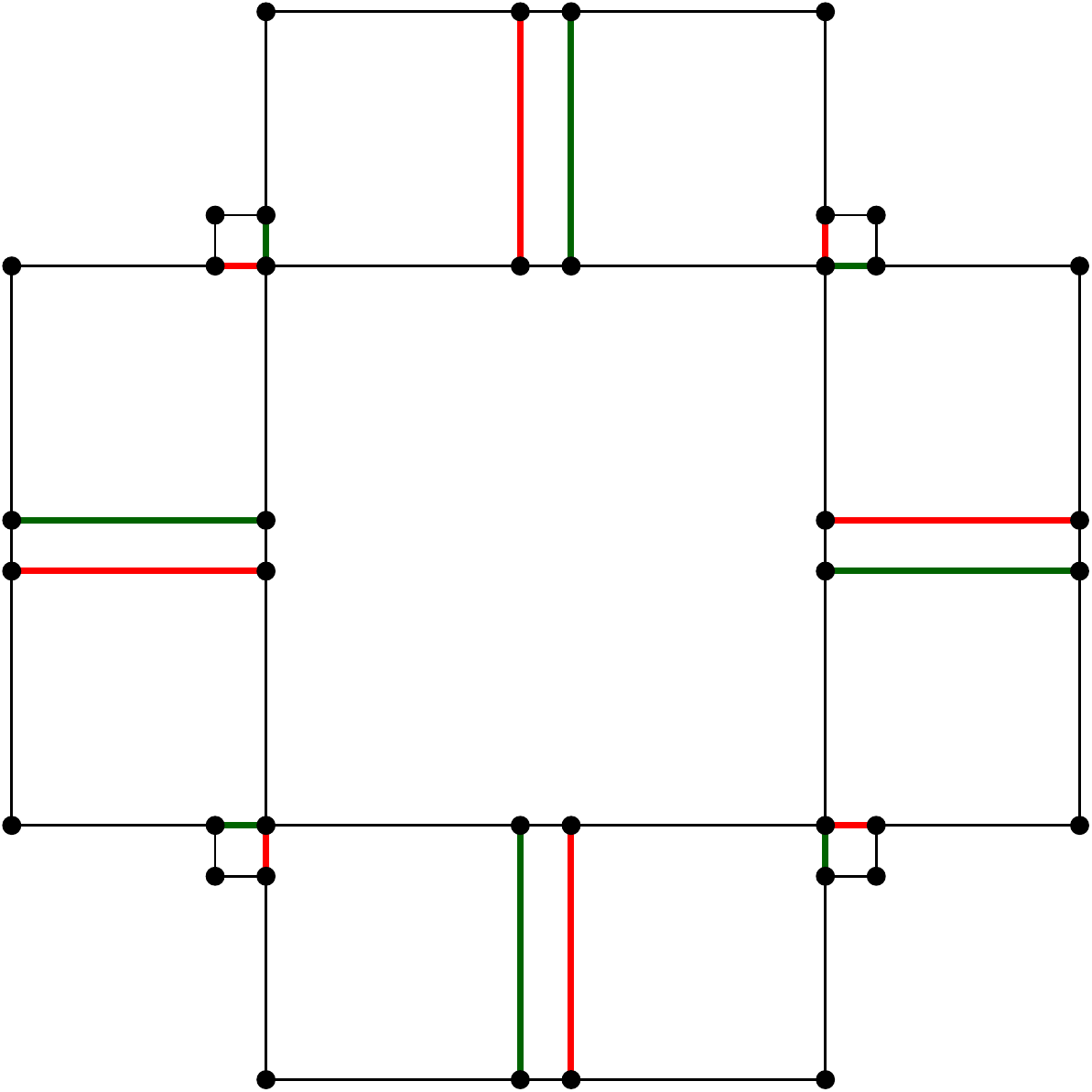}
		\caption{}
		\label{fig:variable_TF}
	\end{subfigure}
	\caption{The variable polygon with four terminals indicated by $1,2,3,4$ in blue (a) and its minimal 1.2-fat partitions representing the \texttt{True} (green) and \texttt{False} (red) assignment (b). }
	\label{fig:variablefig}
\end{figure}

\begin{figure}[t]
	
	\centering
	\begin{subfigure}[c]{0.28\textwidth}
		\includegraphics[height=1.5cm]{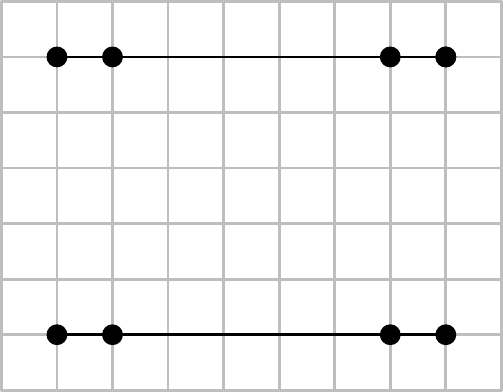}
		\caption{}
		\label{fig:wire}
	\end{subfigure}
	\begin{subfigure}[c]{0.28\textwidth}
		\includegraphics[height=1.3cm]{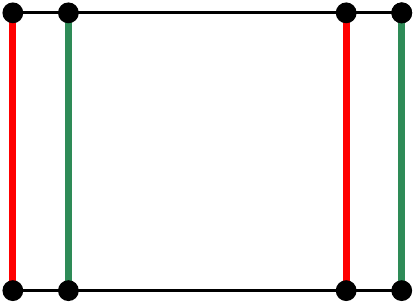}
		\caption{}
		\label{fig:wire_TF}
	\end{subfigure}
	\begin{subfigure}[c]{0.35\textwidth}
		\includegraphics[height=1.5cm]{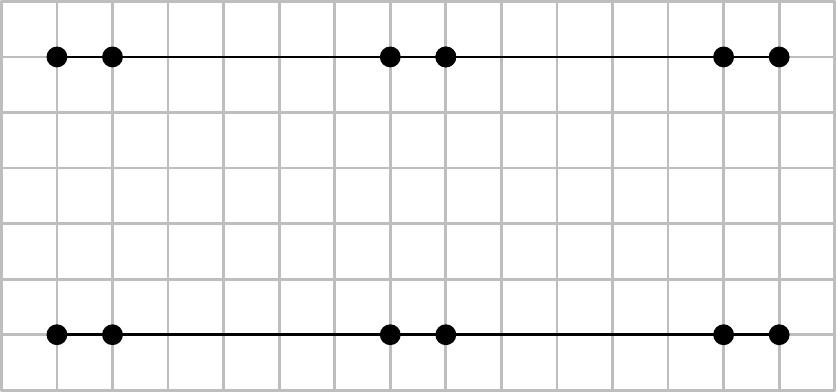}
		\caption{}
		\label{fig:2wire}
	\end{subfigure}
	\caption{A single wire polygon (a), its two partitions that represent the \texttt{True} (green) or \texttt{False} (red) assignment (b), and two connected wire polygons (c).}
	\label{fig:wirefig}
\end{figure}
Each wire consists of a set of individual \emph{wire polygons} that are connected with each other (as depicted in Figure~\ref{fig:wirefig}) to carry the variable assignment to the clause polygon. The wires can be attached at the terminals in two possible orientations (see Fig.~\ref{fig:attachfig}) depending on whether the variable appears in the clause negated or unnegated. If the wire is connected in the unnegated orientation and the variable is set to \texttt{True}, the green polygon in Figure~\ref{fig:attach_T} can cover the top part of the wire as well, but this is not the case if the variable is set to \texttt{False}. The reverse is true if the wire is connected in the negated orientation. If a wire is partitioned in this way (unnegated position and \texttt{True} assignment or negated position and \texttt{False} assignment), we say that it carries \texttt{True}. \par

\begin{figure}[t]
	
	\centering
	\begin{subfigure}[t]{0.45\textwidth}
		\includegraphics[width=\textwidth]{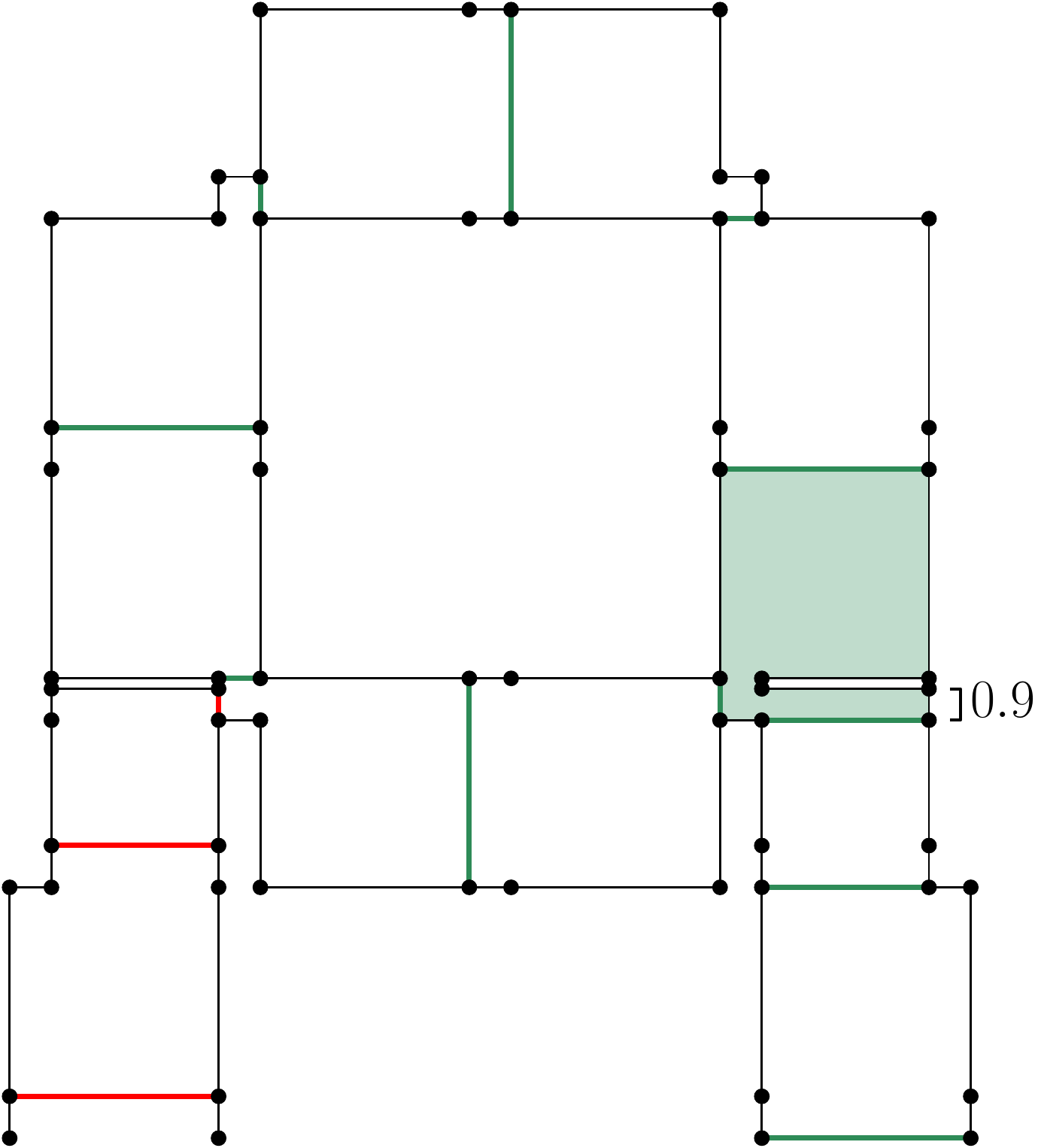}
		\caption{\texttt{True} assignment of variable.}
		\label{fig:attach_T}
	\end{subfigure}\qquad
	\begin{subfigure}[t]{0.45\textwidth}
		\includegraphics[width=\textwidth]{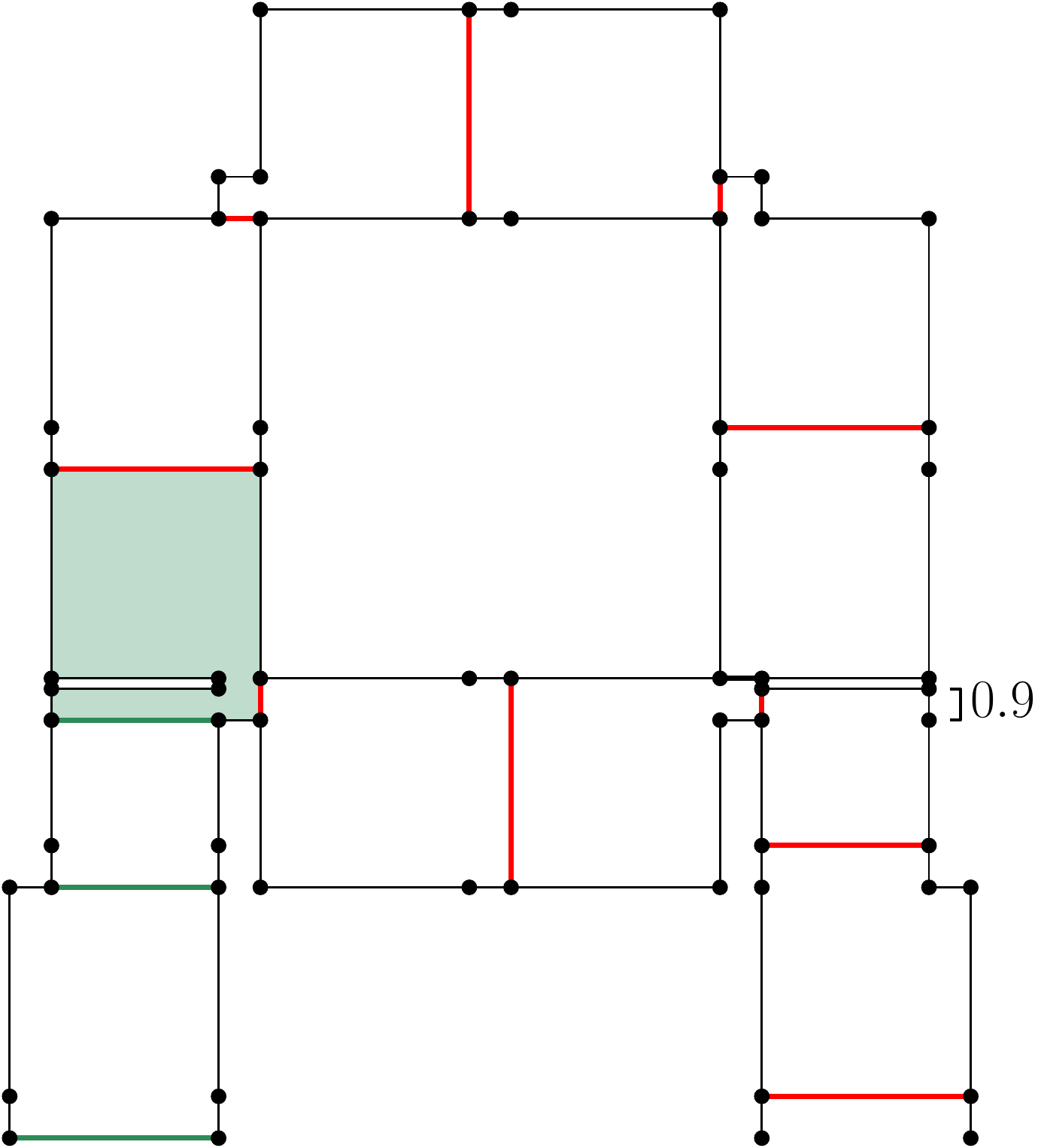}
		\caption{\texttt{False} assignment of variable.}
		\label{fig:attach_F}
	\end{subfigure}
	\caption{Attaching a wire to the variable in unnegated orientation (on bottom right terminal) and in negated orientation (on bottom left terminal) which switches the \texttt{True}/\texttt{False} value.}
	\label{fig:attachfig}
\end{figure}

The variable assignment is carried to the \emph{clause polygon} (see Fig.~\ref{fig:clausefig}). For each of the three variables contained in the clause, this polygon has one terminal, where the wires will be attached. Depending on the values these wires carry, a different number of polygons is needed to partition the clause polygon into 1.2-fat components. If some wire carries \texttt{True} (see Fig.~\ref{fig:clause_TFF} and \ref{fig:clause_TTT}), the tip of the connected terminal (gray) is already covered and center part of the clause polygon (dark green) can be covered as well. If more than one wire carries \texttt{True} either one of the corresponding polygons (light green) can be used to cover the center. In either case, the partition requires exactly four polygons.
If all wires carry \texttt{False} (see Fig.~\ref{fig:clause_FFF}), the center of the clause polygon depicted in red cannot be covered by any of the bigger polygons (in light red). Thus additional fifth polygon is needed to cover this area. As a $1.2$-fat polygon is required, the 1x1 square is chosen.\par
\begin{figure}[h] 
	
	\centering
	\includegraphics[width=5cm]{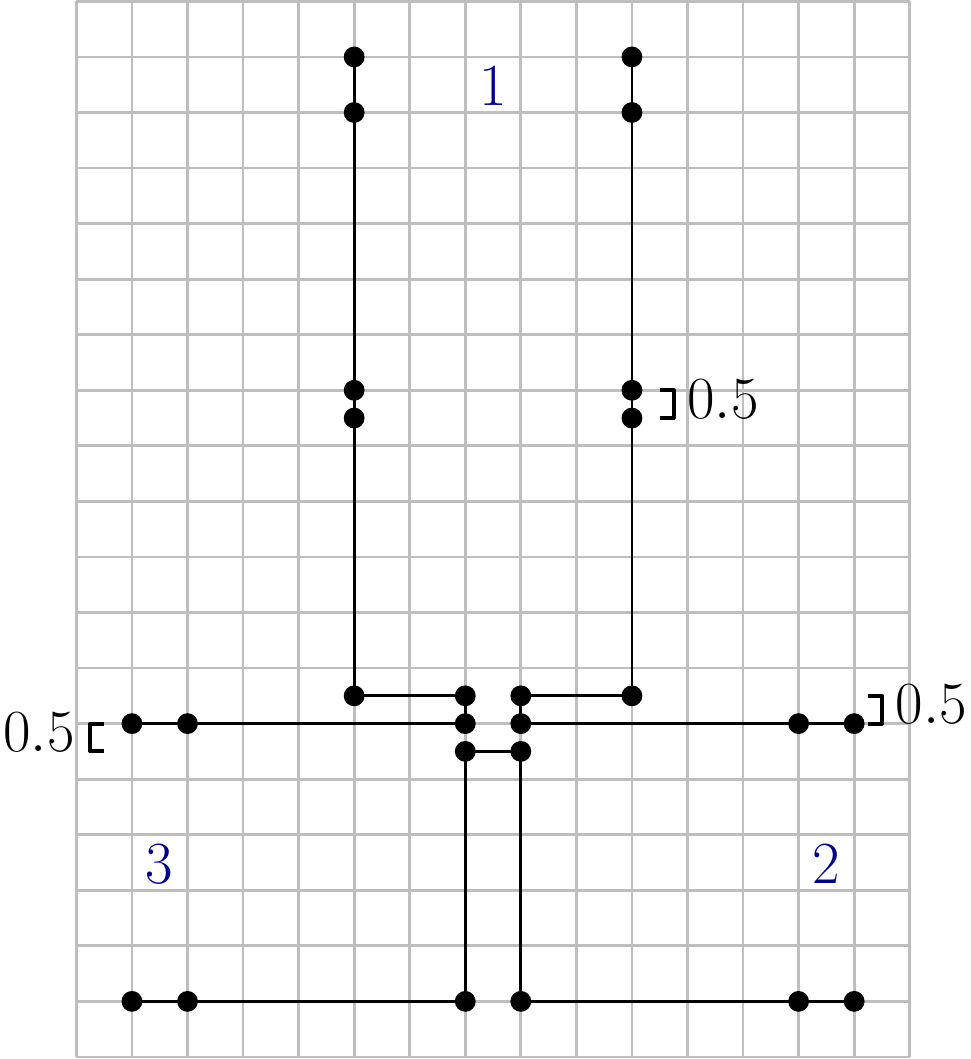}
	\caption{The clause polygon with three terminals indicated by $1,2,3$ in blue.}
	\label{fig:clausefig}
\end{figure}
\begin{figure}[h]
	
	\centering
	\begin{subfigure}[t]{0.3\textwidth}
	    \centering
		\includegraphics[height=5cm]{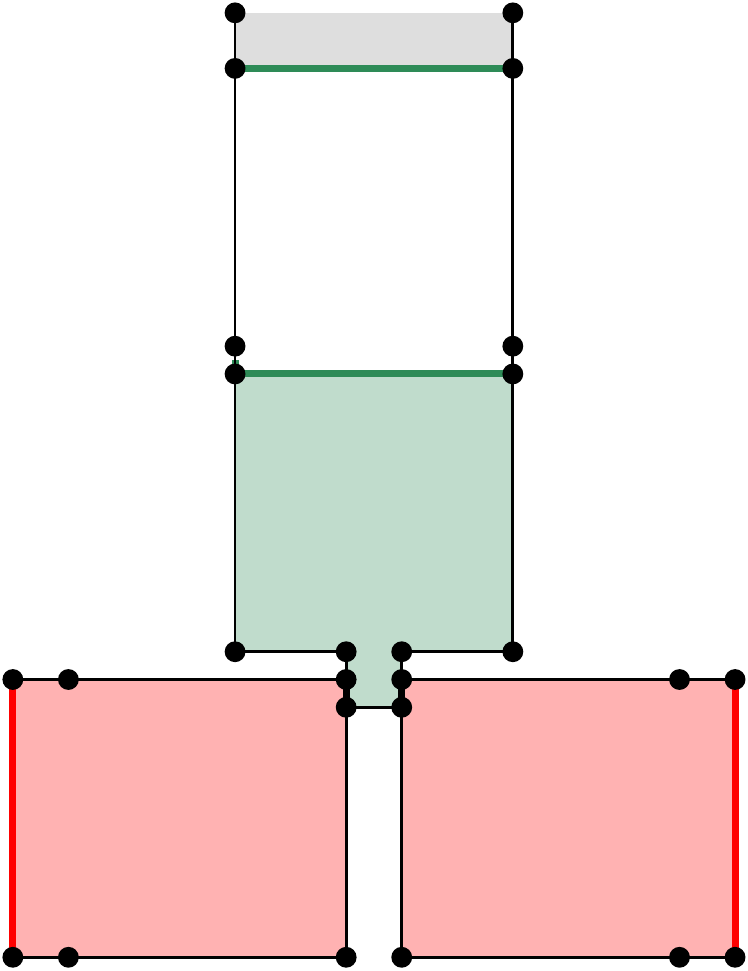}
		\caption{\texttt{True}, \texttt{False}, \texttt{False}}
		\label{fig:clause_TFF}
	\end{subfigure}\hfill
	\begin{subfigure}[t]{0.3\textwidth}
	    \centering
		\includegraphics[height=5cm]{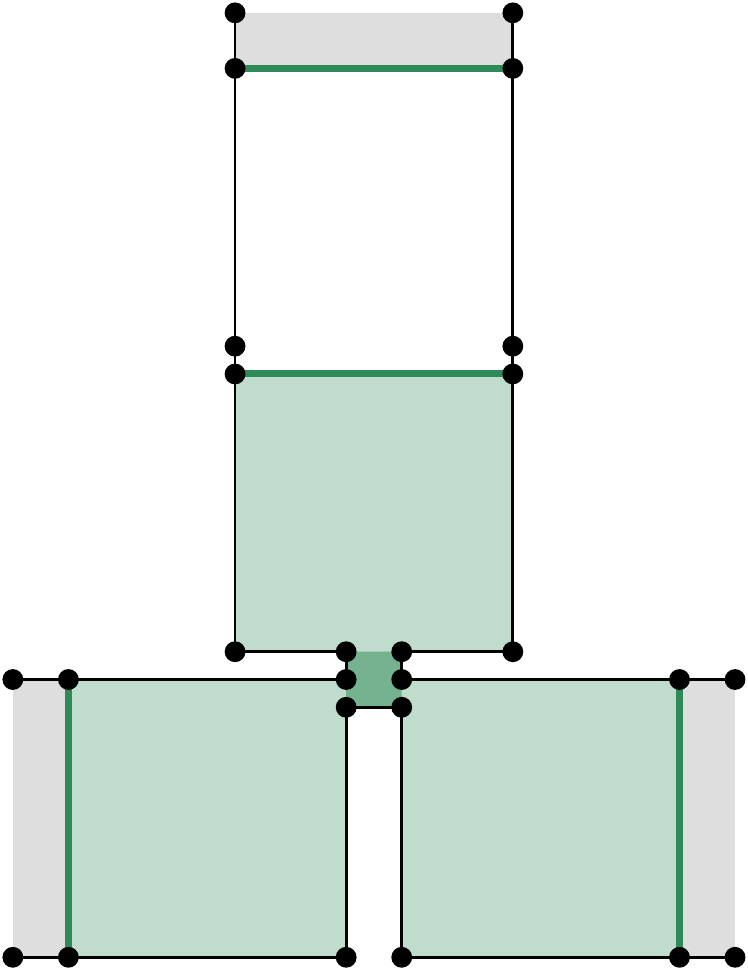}
		\caption{\texttt{True}, \texttt{True}, \texttt{True}}
		\label{fig:clause_TTT}
	\end{subfigure}\hfill
	\begin{subfigure}[t]{0.3\textwidth}
	    \centering
		\includegraphics[height=5cm]{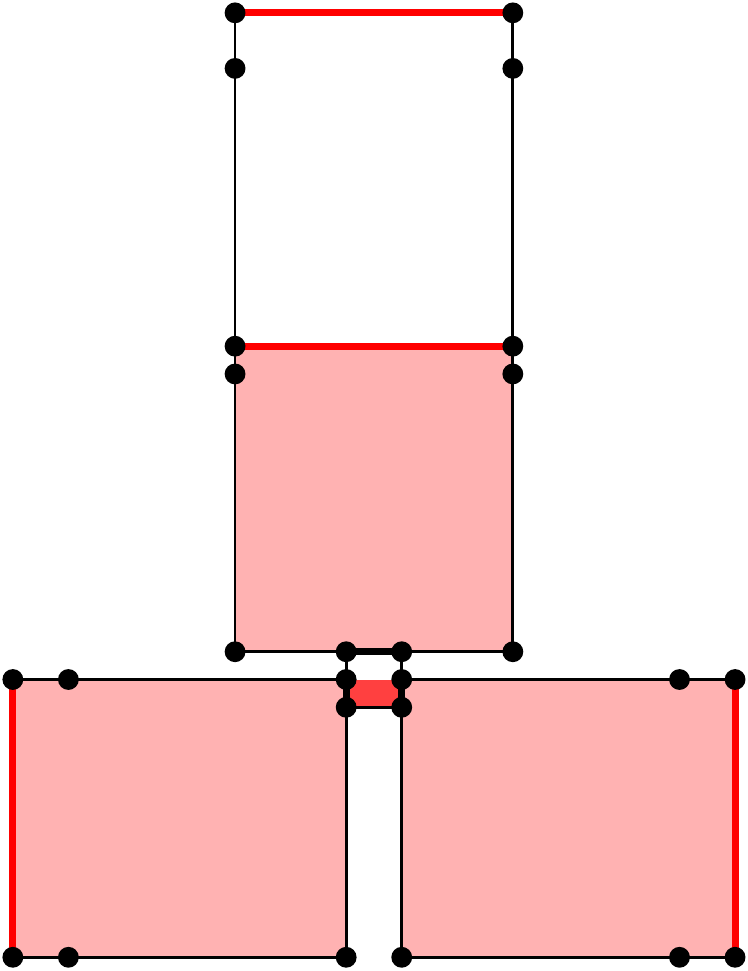}
		\caption{\texttt{False}, \texttt{False}, \texttt{False}}
		\label{fig:clause_FFF}
	\end{subfigure}
	\caption{Partition of the clause polygon depending on different assignments that are transmitted by the wires (\texttt{True} green edges, \texttt{False} red edges).}
	\label{fig:clausesetting}
\end{figure}
The whole polygon representing $G(\phi)$ is constructed based on a \emph{planar orthogonal grid drawing} of $G(\phi)$. That is a planar embedding of the graph such that every vertex is located at an integer grid point, the edges are non-overlapping, and every edge is a chain of orthogonal lines that bend at integer grid points. A schematic example for the placement of variable and clause polygons on the vertices of the drawing can be seen in Figure~\ref{fig:placementgrid}. To construct the edges, the wires have to be bend, shifted or offset. This is achieved by the constructions presented in Figure~\ref{fig:constructwire}. The drawing of $G(\phi)$ is scaled to accommodate the size of the variable and clause polygons as well as the needed adjustments of the wire polygons.\par

\begin{figure}[htb] 
	
	\centering
	\includegraphics[width=7cm]{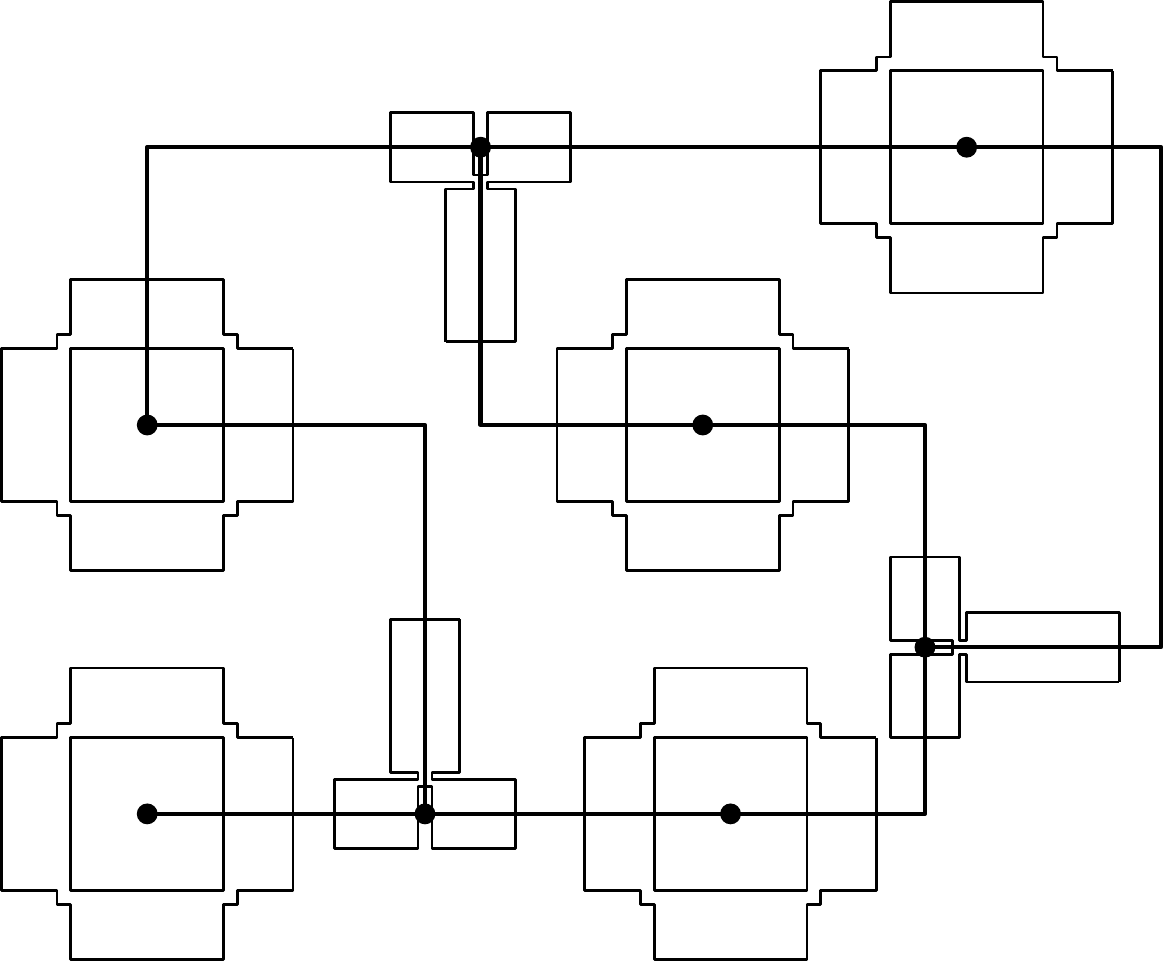}
	\caption{Placement of $5$ variable and $3$ clause polygons on a planar orthogonal grid drawing.}
	\label{fig:placementgrid}
\end{figure}
\clearpage
\begin{figure}[htb]
	
	\centering
	\begin{subfigure}[t]{0.32\textwidth}
		\includegraphics[width=\textwidth]{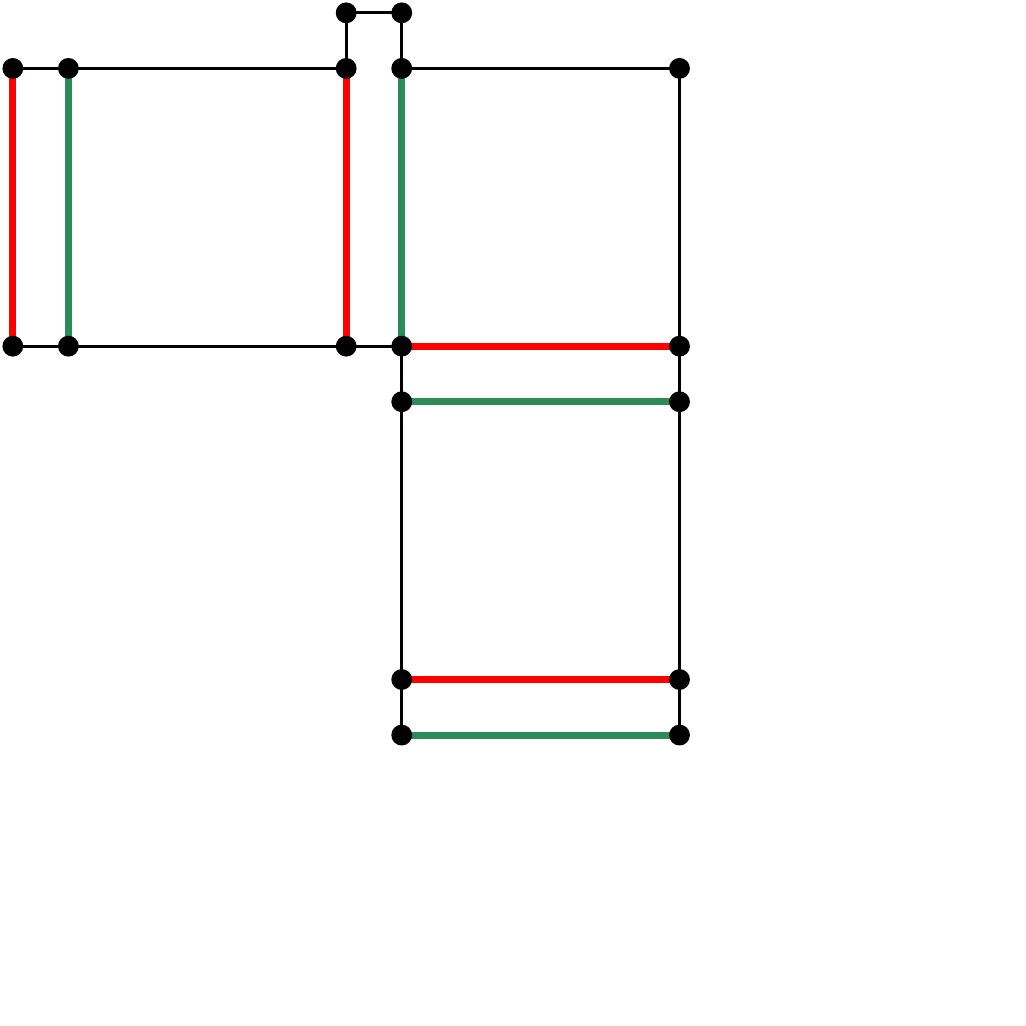}
		\caption{}
		\label{fig:bend}
	\end{subfigure}
	\begin{subfigure}[t]{0.32\textwidth}
		\includegraphics[width=\textwidth]{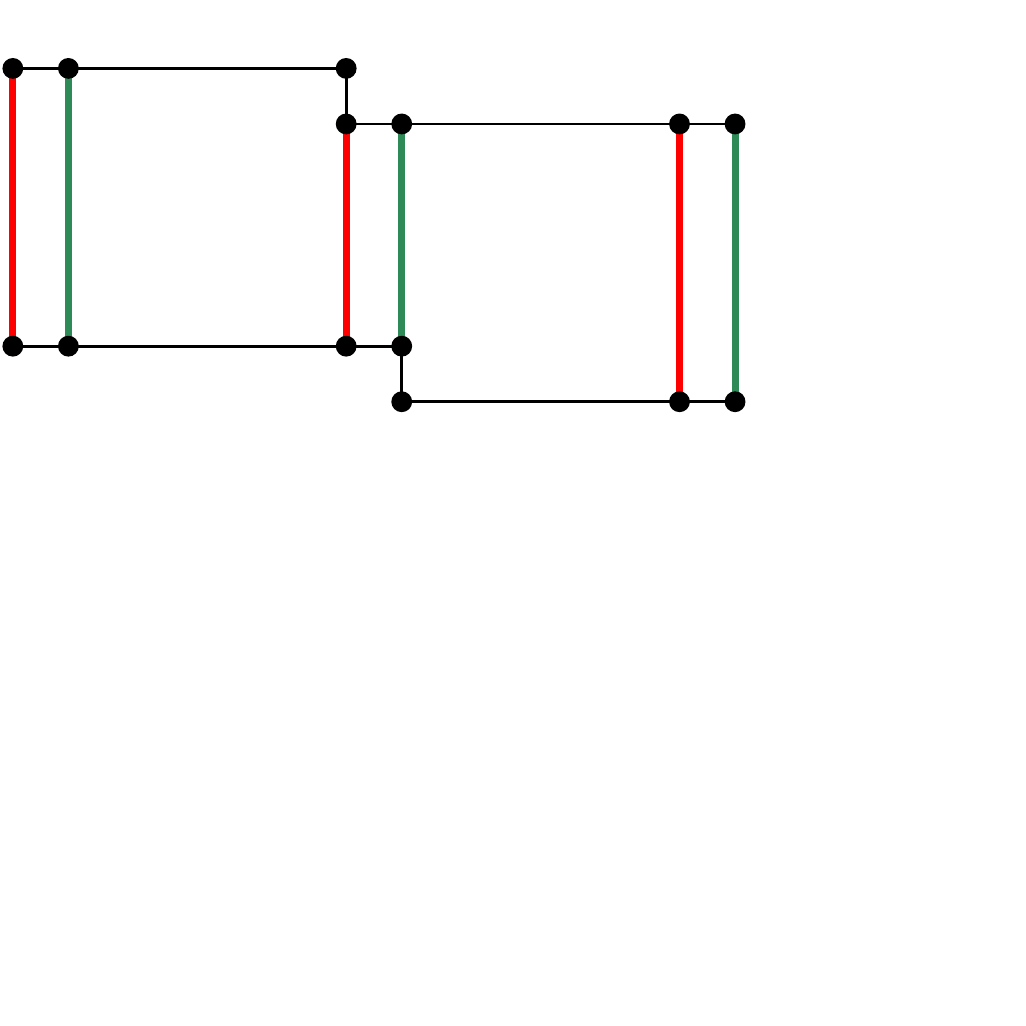}
		\caption{}
		\label{fig:shift}
	\end{subfigure}
	\begin{subfigure}[t]{0.32\textwidth}
		\includegraphics[width=\textwidth]{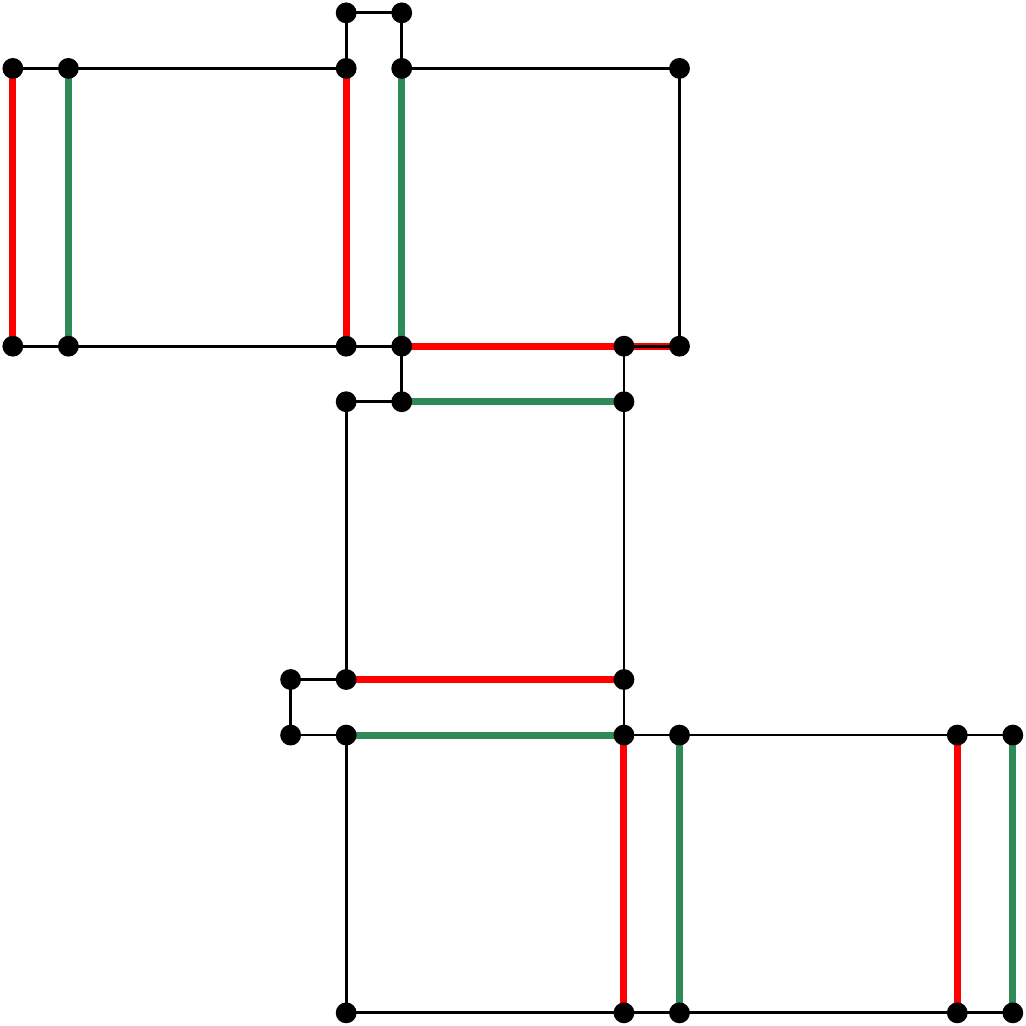}
		\caption{}
		\label{fig:offset}
	\end{subfigure}
	\caption{Bending\,(a), shifting\,(b) and offsetting\,(c) a wire that carries \texttt{True} (green edges) or \texttt{False} (red edges).}
	\label{fig:constructwire}
\end{figure}

As we consider the decision problem, the number $k$ of allowed subpolygons is fixed. We have $k=8v+4c+w$ where $v$ is the number of variables, $c$ the number of clauses and $w$ the number of wire polygons needed in the construction. Bending, shifting and offsetting a wire counts as 3, 2, and 5 wire polygons respectively. 
Note that we can find a minimum $1.2$-fat covering with the same number of components as the minimum partition and that the constructed polygon is orthogonal. This means that the following theorem remains true for the corresponding covering problem and also for orthogonal polygons with holes.
\begin{restatable}{theorem}{alphafat}
The decide $\alpha$-fat partition problem is NP-complete for polygons with holes if square-fatness is applied.
\end{restatable}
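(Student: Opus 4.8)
The plan is to prove NP-completeness in the two usual parts, with essentially all the work in the hardness direction, which the variable, wire and clause gadgets above have already prepared. For membership in NP, a certificate is the set of at most $k$ diagonals describing the claimed partition; one verifies in polynomial time that these diagonals are pairwise non-crossing, that they cut $P$ into exactly $k$ simple subpolygons covering $P$, and that each subpolygon is $\alpha$-fat, the last step being polynomial since for a simple polygon both the smallest enclosing axis-parallel square and the largest contained axis-parallel square can be computed in polynomial time. For hardness I reduce from planar 3,4-SAT: given a formula $\phi$ with $v$ variables and $c$ clauses, compute a planar orthogonal grid drawing of $G(\phi)$, scale it by a large enough constant so that the variable and clause polygons fit at the corresponding grid vertices and the wires can be routed along the edges using the bend, shift and offset gadgets wherever the drawing bends, shifts or offsets, let $w$ be the resulting total number of wire polygons (each bend, shift, offset contributing $3$, $2$, $5$), and set $k=8v+4c+w$. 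All of this is polynomial in $|\phi|$, so it remains to show the constructed polygon has an $\alpha$-fat partition of size $k$ iff $\phi$ is satisfiable.

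For completeness, from a satisfying assignment I partition every variable polygon into the $8$ canonical pieces encoding its value (Fig.~\ref{fig:variable_TF}), partition every wire polygon so that it carries that value in its attached orientation (Figs.~\ref{fig:attachfig},~\ref{fig:constructwire}), and — since each clause contains a satisfied literal whose wire then carries \texttt{True} into the clause gadget — partition each clause polygon with the $4$ pieces of Fig.~\ref{fig:clause_TFF} or~\ref{fig:clause_TTT}. These pieces are $\alpha$-fat, pairwise non-overlapping, and exactly $k$ in number.

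For soundness, which is the crux, I first record the size bound that every $\alpha$-fat subpolygon of the construction with $\alpha=1.2$ fits in a $6\times 6$ axis-parallel square: its largest inscribed axis-parallel square has side at most $5$, the width of every gadget, so its enclosing square has side at most $1.2\cdot 5 = 6$. Using this bound I run a finite case analysis of how $\alpha$-fat pieces can tile each gadget, driven by the few ways a piece can cover the narrow "tips" of a gadget, to obtain gadget rigidity: every variable polygon needs at least $8$ pieces, and exactly $8$ only in one of the two canonical \texttt{True}/\texttt{False} states; every wire polygon needs at least its stipulated number of pieces, and exactly that number only if it carries a single truth value propagated unchanged through bends, shifts and offsets; and every clause polygon needs at least $4$ pieces, and at least $5$ if all three incident wires carry \texttt{False}. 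Since pieces of distinct gadgets are disjoint and the per-gadget lower bounds add up to $8v+4c+w=k$, a partition of size exactly $k$ must meet each bound with equality; hence the variable gadgets define a consistent truth assignment, the wires transmit it faithfully, and each clause gadget, being partitioned with only $4$ pieces, has an incident wire carrying \texttt{True}, so every clause is satisfied and $\phi$ is satisfiable.

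The main obstacle throughout is exactly this soundness rigidity — ruling out a "clever" $\alpha$-fat tiling that matches a gadget's minimum piece count while encoding an inconsistent state, in particular showing the all-\texttt{False} clause gadget genuinely cannot be done with $4$ pieces and that a wire cannot flip its value for free; this is what the enumeration of tip-covering pieces settles, kept finite and in fact small by the $6\times 6$ size bound. Since all canonical partitions used above are realized by non-crossing diagonals and the constructed polygon is orthogonal, the same construction also yields NP-completeness of the decide $\alpha$-fat covering problem and of both problems restricted to orthogonal polygons with holes.
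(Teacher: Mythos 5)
Your proposal is correct and follows essentially the same route as the paper's proof: NP membership by polynomial-time verification, reduction from planar 3,4-SAT via the variable/wire/clause gadgets placed on a scaled orthogonal grid drawing with $k=8v+4c+w$, completeness from a satisfying assignment, and soundness by arguing each gadget meets its piece-count lower bound with equality so the variable states are consistent and every clause receives a \texttt{True} wire. The paper likewise leaves the gadget-rigidity case analysis at the level of the figures, so your explicit flagging of it (and the $6\times6$ size bound making it finite) is if anything slightly more careful than the published argument.
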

\begin{proof}
Given a set of $k$ $\alpha$-fat polygons, we can verify in polynomial time if this is a partition of our polygon $P$. Hence, the problem is in NP.

Let $\phi$ be an instance of planar 3,4-SAT and $P$ be the polygon representing $G(\phi)$ constructed as described in Section~\ref{sec:minimum alpha fat decomposition}. We show that $P$ has a minimum 1.2-fat partition of a certain size $k$ if and only if $\phi$ is satisfiable.

Assume that $\phi$ can be satisfied with a truth assignment $T$. Then, the polygon can be minimally partitioned into $k=8v + w + 4 c$ components. The values $v$ and $c$ are the numbers of variables and clauses respectively. The value $w$ corresponds to the number of wire polygons needed to construct $G(\phi)$ with $P$. For each of the $v$ variables, we partition the variable polygon into 8 components according to $T$. The wires are partitioned into $w$ subpolygons. Depending on the assignment of the variables the tip of the terminal (the part where the wire and clause polygon overlap) might be uncovered. Since the assignment $T$ satisfies $\phi$, we know that for each clause there is at least one wire carrying \texttt{True} such that the tip of the terminal is already covered and the clause polygon can be partitioned into 4 components. 

Assume that $P$ has a minimum 1.2-fat partition $Z$ of size $k = 8v + w + 4 c$ as above. There are two ways how to minimally partition a variable polygon into 1.2-fat polygons, which both require 8 components. To partition a sequence of $l$ wire polygons, $l$ subpolygons are needed. If the tip of the wire at the variable terminal is already covered, the wire carries \texttt{True} and the tip at the clause terminal will be covered as well. If we use $w$ polygons to partition the wires, 4$c$ polygons remain for the partition of the clauses. Since each of the $c$ clause polygons can only be partitioned into 4 components if the tip of one of the terminals is already covered by another polygon -- which is only the case if the corresponding wire carries \texttt{True} -- we know that each clause is satisfied. Hence, $\phi$ is satisfied. 
\end{proof}

\subsection{Reduction for disk-fatness}\label{sec:disk reduction}
The construction presented in Section~\ref{sec:square reduction} does not work if disk-fatness is applied. For all previously feasible (with $AR_{square}\leq 1.2=6/5$) subpolygons to remain feasible, we would have to set $\alpha_{disk}=6/5\sqrt{2}$. However, there exist other subpolygons in the construction with $AR_{disk}\leq 6/5\sqrt{2}$ but $AR_{square}>6/5$ -- specifically at the wire bends and inside the clause polygon. These would then become feasible and cause an inconsistent transmission of \texttt{True}/\texttt{False} values. We adjust the construction such that the aspect ratio $AR_{disk}$ of all subpolygons that are supposed to be feasible is at most $\alpha=\sqrt{61}/5 \approx 1.56$.\par
For variable polygons and their attachment to the wires nothing changes. Wires can be shifted in the same way, but bending and offsetting has to be adjusted. 
We have to remove the 1x1-square bulge at the corners because the corner polygon in the red partition in Figures~\ref{fig:bend} and~\ref{fig:offset} would no longer be feasible. The adjusted wires and their feasible partitions are shown in Figure~\ref{fig:constructwire_circles} and the respective corner polygons are depicted in Figure~\ref{fig:circle_wires}. The two polygons in Figure~\ref{fig:bend_F} and~\ref{fig:bend_T} have the correct aspect ratio of $AR_{disk}=\sqrt{61}/5$. The other polygon shown in~\ref{fig:offset_not} has a MCC of diameter $6\sqrt{2}$ and MIC of diameter $2(9-\sqrt{40})$. Therefore, the aspect ratio is larger than $\sqrt{61}/5=\alpha$, which means that it is not feasible in our construction. Note that this polygon, however, would be feasible regarding square-fatness with $\alpha=1.2$.\par
\begin{figure}[h]
	
	\centering
	\begin{subfigure}[b]{0.35\textwidth}
		\includegraphics[height= 3.5cm]{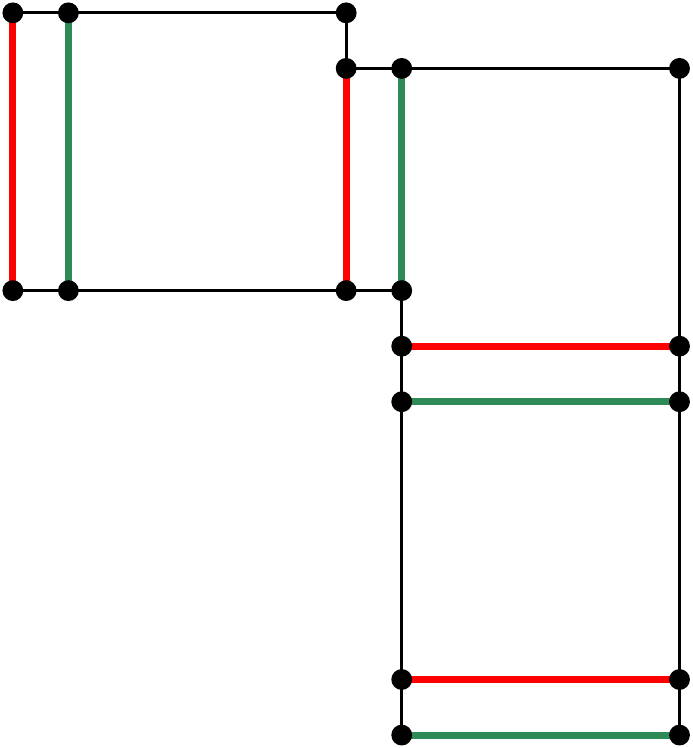}
		\caption{}
		\label{fig:bend_circle}
	\end{subfigure}\qquad
	\begin{subfigure}[b]{0.45\textwidth}
		\includegraphics[height=3.5cm]{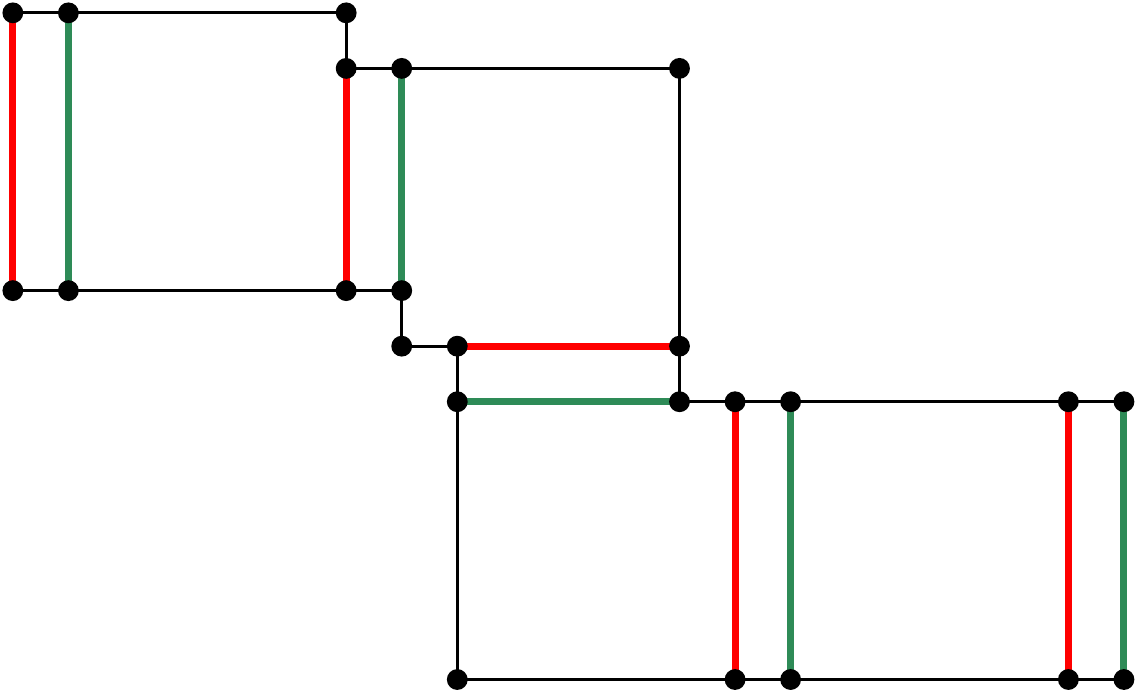}
		\caption{}
		\label{fig:offset_circle}
	\end{subfigure}
	\caption{Bending (a) and offsetting (b) a wire that carries \texttt{True} (green edges) or \texttt{False} (red edges) in the  disk-fatness construction.}
	\label{fig:constructwire_circles}
\end{figure}

\begin{figure}[h]
	
	\centering
	\begin{subfigure}[b]{0.3\textwidth}
		\includegraphics[width=\textwidth]{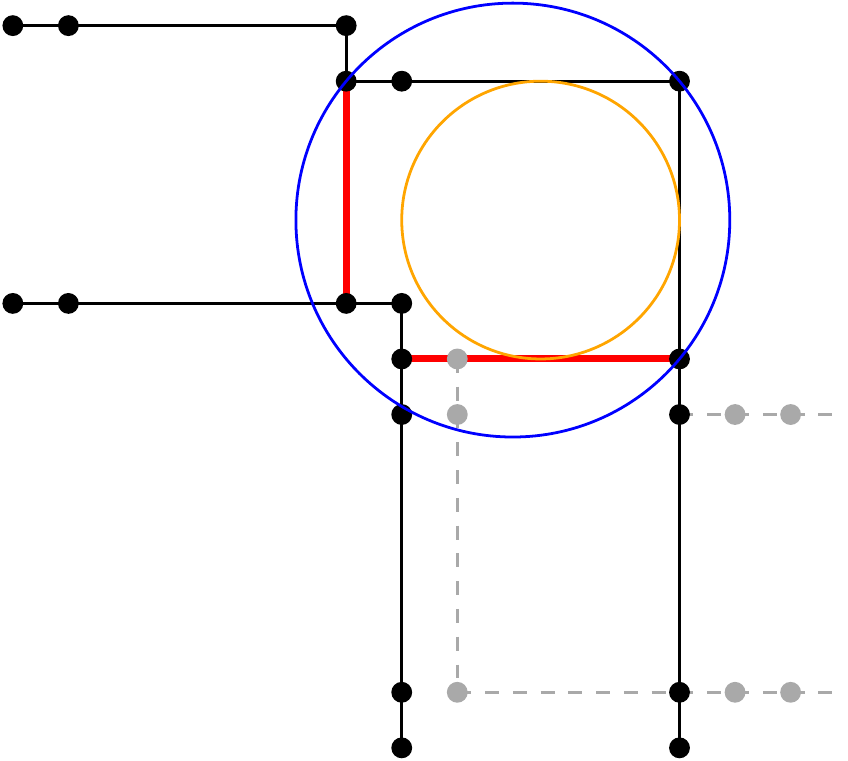}
		\caption{Corner polygon for \texttt{False}.}
		\label{fig:bend_F}
	\end{subfigure}\hfill
	\begin{subfigure}[b]{0.3\textwidth}
		\includegraphics[width=\textwidth]{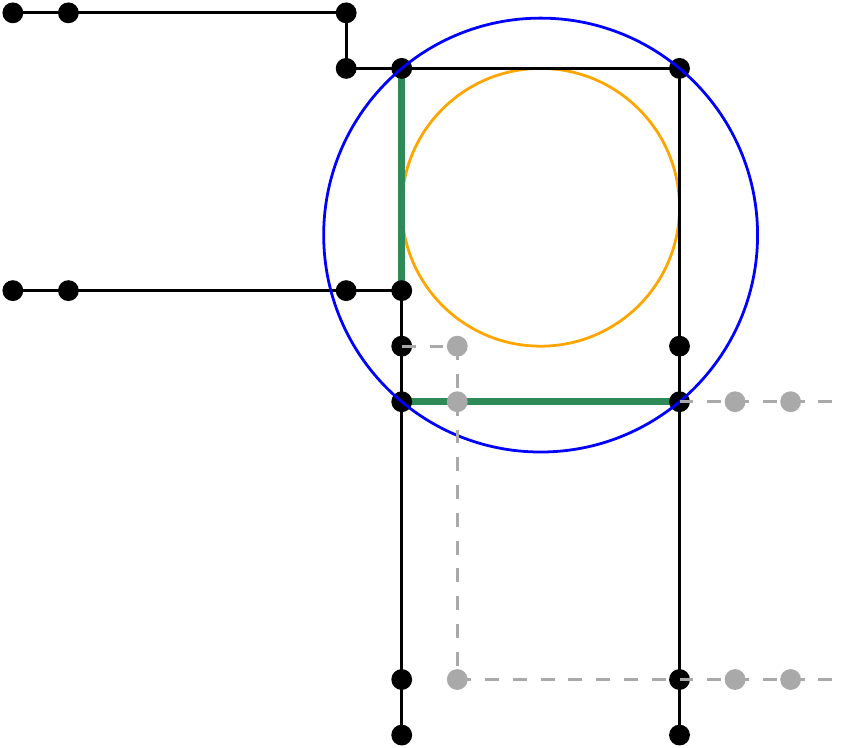}
		\caption{Corner polygon for \texttt{True}.}
		\label{fig:bend_T}
	\end{subfigure}\hfill
	\begin{subfigure}[b]{0.3\textwidth}
		\includegraphics[width=\textwidth]{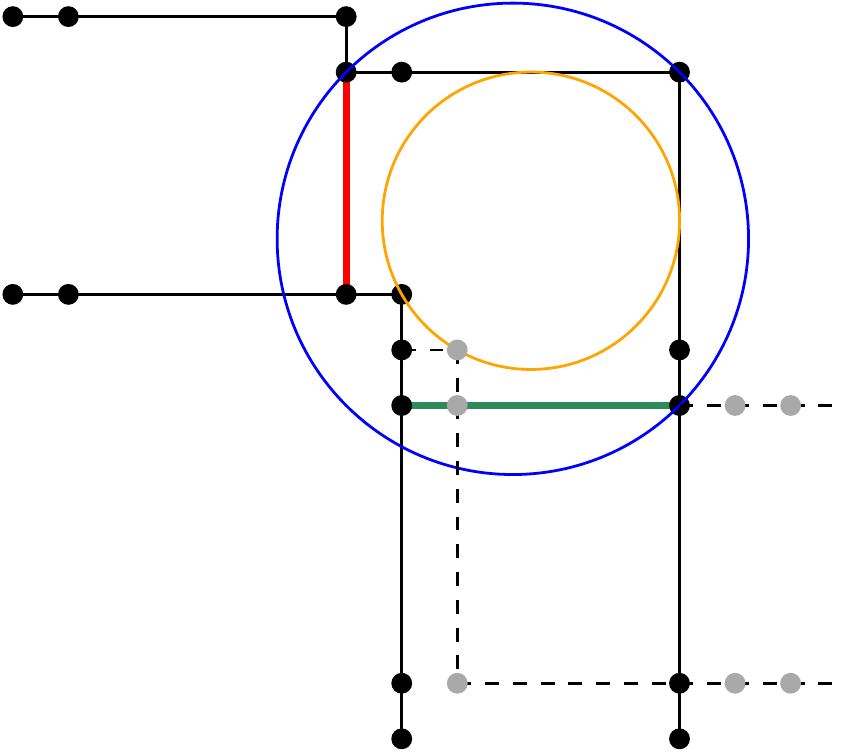}
		\caption{Not feasible corner polygon.}
		\label{fig:offset_not}
	\end{subfigure}
	\caption{Three possible corner polygons in a wire polygon that is bend (or offset in gray) with their corresponding MCC (blue) and MIC (orange). Only the polygons in (a) and (b) are $\alpha$-fat with the given $\alpha=\sqrt{61}/5$ and hence feasible.}
	\label{fig:circle_wires}
\end{figure}
\newpage
Furthermore, we have to slightly adjust the clause polygon. Figure~\ref{fig:clause_before} shows the previous construction for square-fatness. The illustrated polygon has $AR_{square} = 6.5/5$ and was not feasible before. However, it fulfills $AR_{disk} = \sqrt{61}/5$ (MCC and MIC are shown in blue and resp. orange) and thus it would cause inconsistent  \texttt{True}/\texttt{False} values.
By changing the placement of four boundary vertices (shown in blue in Fig.~\ref{fig:clause_circle}), we can ensure the correct transmission. Meaning, the clause polygon can be decomposed into four feasible $\alpha$-fat subpolygons exactly if at least one of the wires transmits \texttt{True}. 
Figure~\ref{fig:clause_with_circles} exemplifies the idea behind this construction. Let $P_F$ and $P_T$ be the two polygons depicted in Figure~\ref{fig:clausecircle_F} and~\ref{fig:clausecircle_T} respectively. The polygon $P_F$ has $AR_{disk}=\sqrt{61}/5=\alpha$ and $P_T$ has $AR_{disk}<\alpha$. Therefore, both subpolygons are feasible in our construction. The position of the four blue vertices (as marked in Fig.~\ref{fig:clause_circle}) was adjusted in such a way that the lower two are contained in the MCC of $P_T$ but not in the MCC of $P_F$. As the aspect ratio of $P_F$ is exactly $\alpha$, no larger polygon (coming from terminal 1) can be chosen to cover the area between these vertices and thereby the center area of the clause polygon. Thus, the center can only be covered if another wire carries \texttt{True} or an additional fifth polygon is included in the partition.\par 
\begin{figure}[h]
    \centering
	\begin{subfigure}[b]{0.45\textwidth}
	    \centering
		\includegraphics[width=5cm]{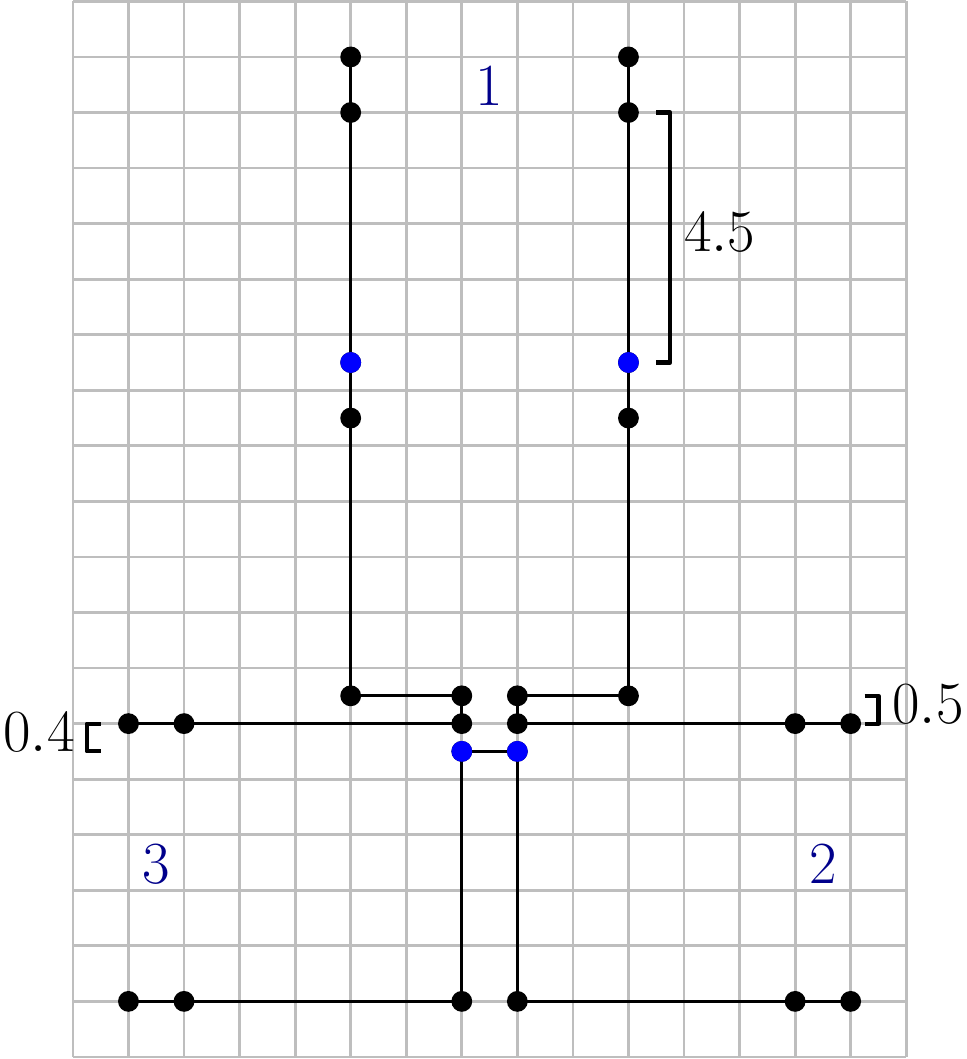}
		\caption{The clause polygon for disk-fatness.}
		\label{fig:clause_circle}
	\end{subfigure}\hfill
	\begin{subfigure}[b]{0.45\textwidth}
	    \centering
		\includegraphics[width=5cm]{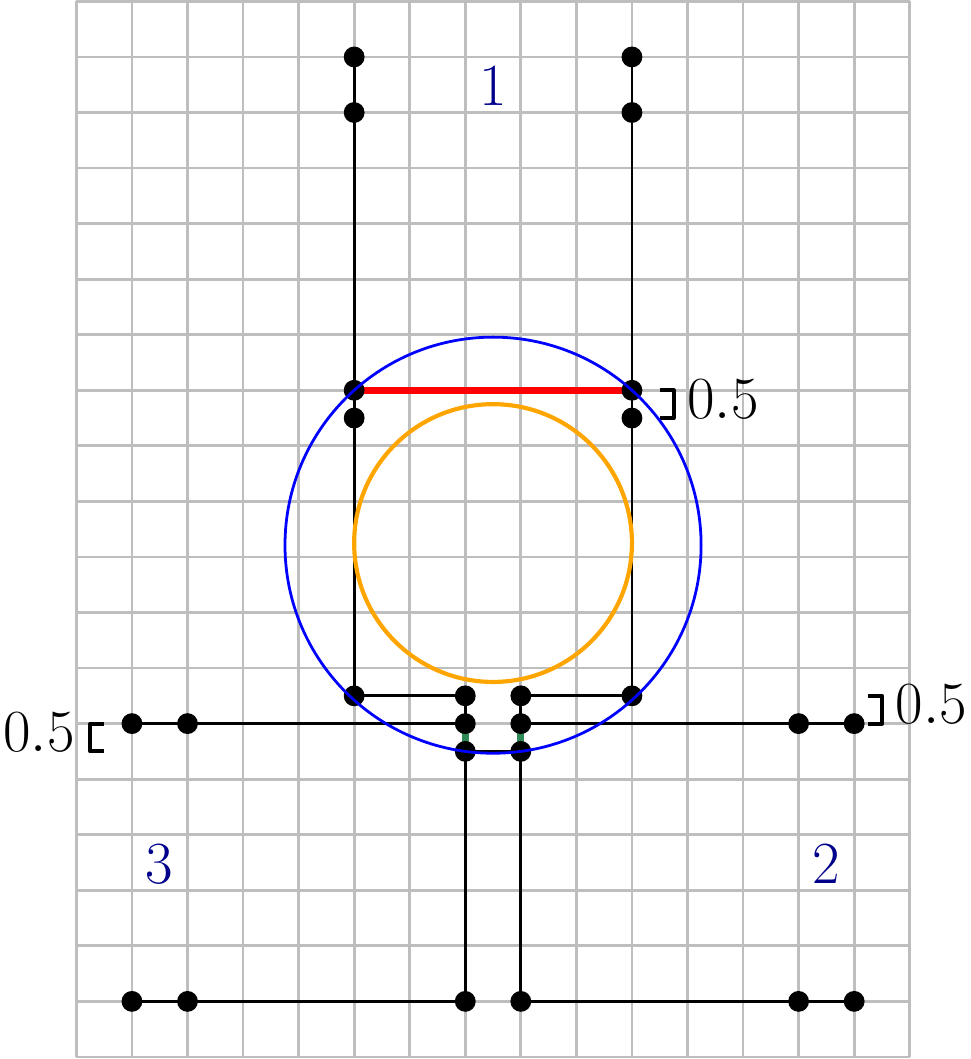}
		\caption{The clause polygon for square-fatness.}
		\label{fig:clause_before}
	\end{subfigure}
	\label{fig:clause_disk}
	\caption{The adjusted construction of the clause polygon for disk-fatness in comparison to the one for square-fatness.   }
\end{figure}	

\begin{figure}[h]
	
	\centering
	\begin{subfigure}[t]{0.3\textwidth}
	    \centering
		\includegraphics[height=5cm]{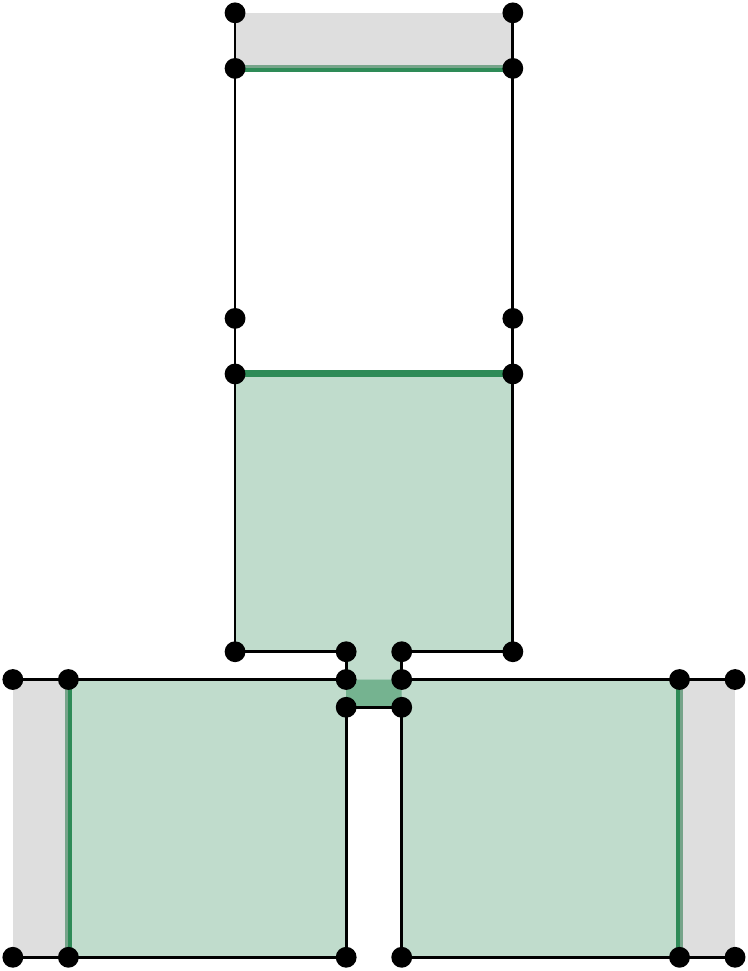}
		\caption{\texttt{True}, \texttt{True}, \texttt{True}}
		\label{fig:clause_TTT_circle}
	\end{subfigure}\hfill
	\begin{subfigure}[t]{0.3\textwidth}
	    \centering
		\includegraphics[height=5cm]{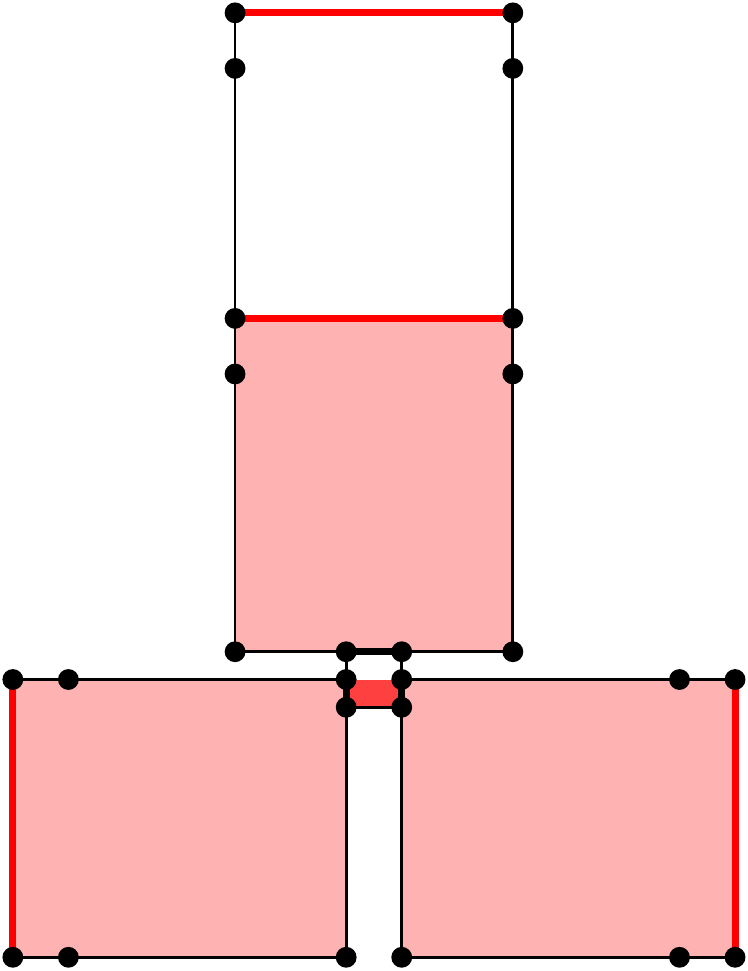}
		\caption{\texttt{False}, \texttt{False}, \texttt{False}}
		\label{fig:clause_FFF_circle}
	\end{subfigure}\hfill
	\begin{subfigure}[t]{0.3\textwidth}
	    \centering
		\includegraphics[height=5cm]{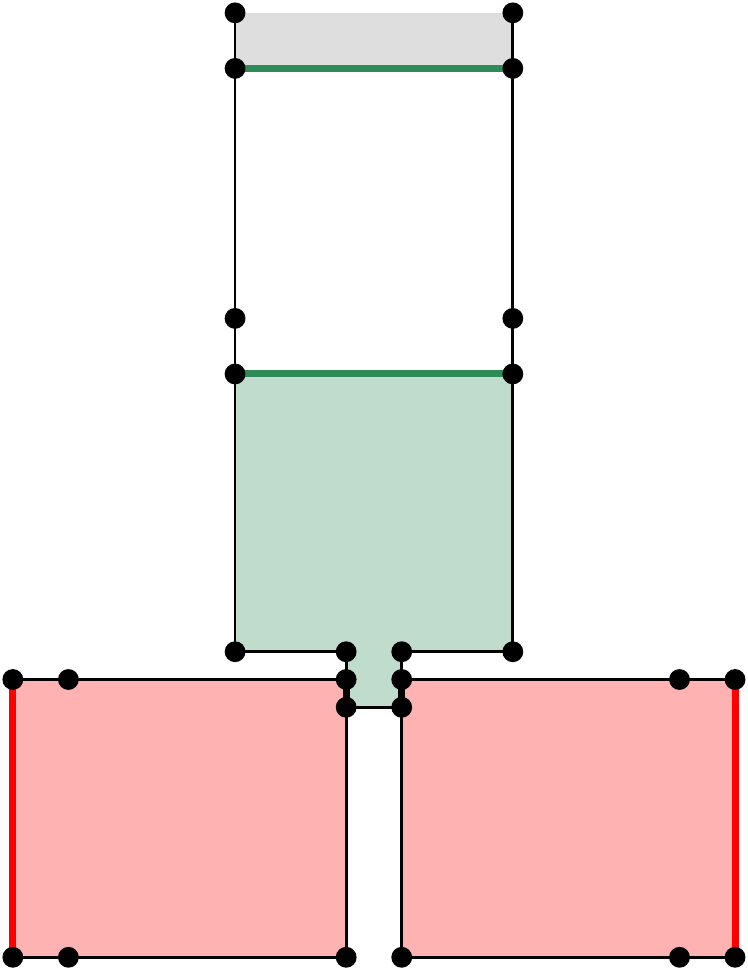}
		\caption{\texttt{True}, \texttt{False}, \texttt{False}}
		\label{fig:clause_TFF_circle}
	\end{subfigure}
	\caption{Partition of the clause polygon for disk-fatness depending on different assignments that are transmitted by the wires (\texttt{True} green edges, \texttt{False} red edges).}
		\label{fig:clausefig_circles}
\end{figure}

\begin{figure}[h]
	
	\centering
	\begin{subfigure}[b]{0.45\textwidth}
	\centering
		\includegraphics[width=4cm]{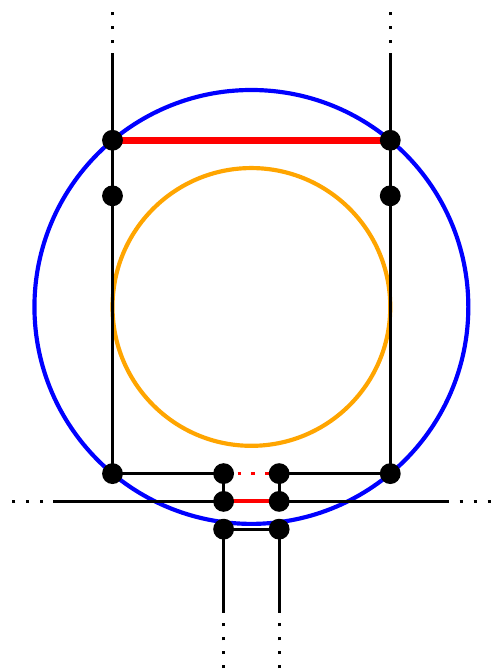}
		\caption{Subpolygon for \texttt{False}.}
		\label{fig:clausecircle_F}
	\end{subfigure}\qquad
	\begin{subfigure}[b]{0.45\textwidth}
	\centering
		\includegraphics[width=4cm]{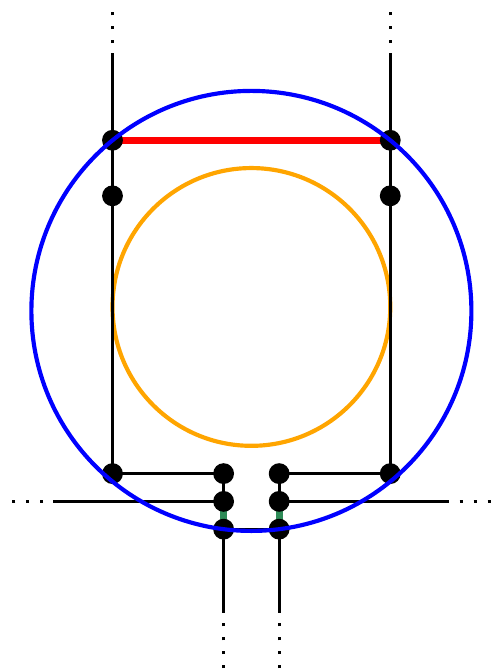}
		\caption{Subpolygon for \texttt{True}.}
		\label{fig:clausecircle_T}
	\end{subfigure}
	\caption{Subpolygons in the clause polygon for disk-fatness with their corresponding MCC (blue) and MIC (orange).}
	\label{fig:clause_with_circles}
\end{figure}	
	
Note that the construction resulting from these adjustments would not work for square-fatness. Again other subpolygons that interfere with the correct transmission of \texttt{True} and \texttt{False} values would become feasible. For example, the corner polygon in Fig.~\ref{fig:offset_not} would be feasible with the previous $\alpha$ for square-fatness. Using the adjusted construction, we can reduce from planar 3,4-SAT and thus prove the NP-completeness in the same way as before.
\begin{theorem}
Decide $\alpha$-fat partition problem is NP-complete for polygons with holes if disk-fatness is applied.
\end{theorem}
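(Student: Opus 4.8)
The plan is to mirror the structure of the proof for the square-fatness case (Theorem of Section~\ref{sec:square reduction}), reusing as much of it as possible and only verifying that the adjusted gadgets behave correctly under the disk-fatness measure with $\alpha=\sqrt{61}/5$. Membership in NP is immediate and identical: given a candidate family of $k$ subpolygons, one checks in polynomial time that it partitions $P$ and that each piece has $AR_{disk}\le\alpha$ (the MCC and MIC of a polygon, hence its disk aspect ratio, are computable in polynomial time). So the work is entirely in the reduction from planar 3,4-SAT, and the value of $k$ is the same counting formula $k=8v+4c+w$ (with the same bend/shift/offset costs, modulo the bend/offset gadgets being the adjusted ones of Figure~\ref{fig:constructwire_circles}).

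First I would restate the gadget construction: variable polygons and their attachment to wires are literally unchanged, wire polygons and shifts are unchanged, bends and offsets use the adjusted shapes of Figure~\ref{fig:constructwire_circles}, and the clause polygon is the adjusted one of Figure~\ref{fig:clause_circle}. Then I would collect the geometric facts that have already been asserted in the text and package them as the technical core of the proof: (i) the two corner polygons of Figures~\ref{fig:bend_F} and~\ref{fig:bend_T} have $AR_{disk}=\sqrt{61}/5=\alpha$, hence are feasible, while the polygon of Figure~\ref{fig:offset_not} has $AR_{disk}=6\sqrt{2}/(2(9-\sqrt{40}))>\alpha$, hence is not — this is what forces a bent/offset wire to carry a consistent \texttt{True}/\texttt{False} value just as in the square case; (ii) in the adjusted clause polygon, the two polygons $P_F$ (Figure~\ref{fig:clausecircle_F}) and $P_T$ (Figure~\ref{fig:clausecircle_T}) are both feasible ($AR_{disk}(P_F)=\alpha$, $AR_{disk}(P_T)<\alpha$), but the two lower blue vertices lie inside $\mathrm{MCC}(P_T)$ and outside $\mathrm{MCC}(P_F)$, so no feasible subpolygon reaching in from terminal~1 can cover the gap between them unless that wire already carries \texttt{True}. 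Granting these facts, the clause polygon admits a $4$-piece feasible partition iff at least one incident wire carries \texttt{True}, and otherwise needs a fifth ($1\times1$) piece — exactly the combinatorial behavior of the square-fatness clause gadget.

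With those lemmas in hand the equivalence argument is a transcription of the earlier proof. For the forward direction, a satisfying assignment $T$ yields: $8$ pieces per variable polygon in the orientation encoding $T$; $w$ pieces total for the wires, routed (with the adjusted bend/offset gadgets) so that each wire carries the literal value dictated by $T$ and its negated/unnegated attachment; and, since every clause has a \texttt{True} literal, $4$ pieces per clause polygon — total $k$. For the converse, a feasible partition of size $k$ must spend exactly $8$ per variable (the only two minimal ways, both consistent assignments), exactly $\lvert\text{wire}\rvert$ per wire segment (with the bend/offset feasibility from fact~(i) forcing the carried value to be globally consistent along each wire and to flip correctly at negated attachments), leaving exactly $4c$ for the $c$ clause polygons; by fact~(ii) a clause polygon is $4$-partitionable only when an incident wire carries \texttt{True}, so every clause is satisfied and $\phi\in\text{3,4-SAT}$. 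The covering and orthogonality remarks carry over verbatim. I expect the main obstacle to be fact~(ii): one must argue not merely that $P_F$ and $P_T$ are the "natural" candidate pieces but that \emph{no} other $\alpha$-fat subpolygon hinged at terminal~1 can simultaneously cover the center gap — i.e. that enlarging $P_F$ to reach between the two lower blue vertices necessarily pushes $AR_{disk}$ above $\alpha$, which is precisely why the four boundary vertices were repositioned; I would make this quantitative by exhibiting the MCC of $P_F$ and checking that any extension containing both gap-endpoints forces a circumscribed circle strictly larger than $d(\mathrm{MCC}(P_F))$ while the inscribed circle cannot grow, so the ratio exceeds $\alpha$. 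A secondary nuisance is confirming that the adjusted bend/offset gadgets still compose correctly with shifts and with the (unchanged) variable attachments so that arbitrary planar orthogonal routings of $G(\phi)$ remain realizable.
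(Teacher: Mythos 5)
Your proposal follows exactly the paper's approach: the paper itself gives no separate proof for the disk-fatness case, stating only that the adjusted gadgets (modified bends/offsets and the repositioned clause vertices) allow the same planar 3,4-SAT reduction as in the square-fatness theorem, which is precisely the structure you lay out. Your additional observation that fact~(ii) --- ruling out \emph{every} other $\alpha$-fat subpolygon covering the center gap, not just $P_F$ and $P_T$ --- deserves a quantitative argument is a fair point about a detail the paper leaves implicit, but it does not change the route.
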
 
As argued before, this result remains true for the covering problem and additionally for orthogonal polygons with holes.

\section{Conclusion}
We presented a polynomial-time algorithm for the min-fat partition problem for simple polygons. Furthermore, we proved that it is NP-complete to decide the $\alpha$-fat partition problem and covering problem for polygons with holes. Both results are true for disk-fatness and the latter also holds for square-fatness. For polygons with holes, the min-fat partition problem remains open. Furthermore, it is unclear whether the minimum $\alpha$-fat or min-fat covering problem is solvable for simple polygons. Moreover, there are no results for any related small and fat decomposition problems that allow Steiner points yet.

%%
%% Bibliography
%%

\bibliography{Literature}

\appendix
\section{Minimum \texorpdfstring{$\alpha$}{alpha}-small decomposition problem with disk-fatness}\label{app:disk reduction}
Worman showed that the minimum $\alpha$-small decomposition problems are NP-complete if the definition of square-smallness is used~\cite{worman2003decomposing}. In their construction, they used $\alpha_{square}=3$ and all feasible subpolygons can be enclosed with a 3x3 square. They claim that their construction holds for disk-smallness as well if $\alpha_{disk}=\sqrt{18}$, which is the diameter of the MCC of the 3x3 square. However, this is incorrect. Their feasible wire polygons are 1x3 rectangles, but the wires can also be partitioned in 1x4 rectangles. The latter were not feasible for $\alpha_{square}$, but they would become feasible with $\alpha_{disk}$ because the diameter of their MCC is only $\sqrt{17}$. Additionally, there are subpolygons inside the clause polygon, which have an MCC of diameter $\sqrt{17}$ but are not supposed to be feasible. For their result to be true, a different $\alpha_{disk}$ has to be chosen and the construction has to be adjusted.\par 
We choose $\alpha_{disk}=\sqrt{13}$. The variable polygon stays the same, as $\alpha_{disk}$ is exactly the diameter of the MCC of the subpolygons that supposed to be feasible. The basic wire polygon can also stay the same, but bending and offsetting a wire has to be adjusted (see Fig.~\ref{fig:constructwire_small}). This is done by again removing the bulges at the corner polygons. Note that all feasible subpolygons now fit inside a 2x3 rectangle and therefore also fulfill $\alpha_{disk}$. The clause polygon can be constructed as a slimmed down version of the clause polygon presented in Section~\ref{sec:minimum alpha fat decomposition} (see Fig.~\ref{fig:constructclause_small}). The center square that can only be covered by another polygon coming from the terminals if at least one of the connected wires carries \texttt{True}. Using the presented construction, we can prove that the problem is NP-complete with an analog reduction from planar 3,4-SAT. 
\begin{theorem}
Decide $\alpha$-small partition and covering problems are NP-complete for polygons with holes if disk-fatness is applied.
\end{theorem}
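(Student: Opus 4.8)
The plan is to follow the template of the square-fatness reduction of Section~\ref{sec:square reduction}, substituting the disk-smallness threshold $\alpha_{disk}=\sqrt{13}$ and the adjusted gadgets. Membership in NP is immediate: given $k$ candidate polygons, one verifies in polynomial time that each has a minimum circumscribed circle of diameter at most $\sqrt{13}$ (equivalently, is $\alpha$-small) and that they form a partition (resp.\ covering) of the input polygon $P$.

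For the hardness direction I reduce from planar 3,4-SAT. Given a formula $\phi$, first compute a planar orthogonal grid drawing of $G(\phi)$. On each variable vertex place the variable gadget (unchanged from the square construction, since $\sqrt{13}$ is exactly the circumscribed diameter of its intended $8$ pieces), on each clause vertex place the slimmed-down clause gadget of Figure~\ref{fig:constructclause_small}, and route each edge by a chain of unit-width wire polygons, using the bend, shift and offset sub-constructions of Figure~\ref{fig:constructwire_small} with the corner bulges deleted. Scale the drawing so that the gadgets and all wire adjustments fit without interference. Set $k=8v+w+4c$, where $v$ and $c$ are the numbers of variables and clauses and $w$ is the total number of wire polygons used, a bend, shift or offset contributing its fixed count. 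By construction every intended subpolygon lies inside a $2\times 3$ rectangle, hence is $\alpha$-small, while the forbidden center square of each clause gadget can only be covered by a larger piece emanating from a terminal whose incident wire carries \texttt{True}.

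Correctness then splits into the two usual directions. If a truth assignment $T$ satisfies $\phi$: partition each variable gadget in the $8$-piece way encoding $T(u)$, which determines which value each wire carries; propagate that value along the wires using $w$ pieces in total; and since every clause contains a true literal, the corresponding terminal tip is already covered, so the clause gadget decomposes into $4$ pieces, giving $k$ pieces overall. Conversely, from a decomposition of size $k$: each variable gadget needs at least $8$ pieces and each wire chain at least its nominal count, so after spending $w$ pieces on the wires exactly $4c$ remain, forcing every clause gadget to be decomposed with only $4$ pieces, which is possible only when some incident wire carries \texttt{True}; reading off the variable gadgets yields a satisfying assignment. The covering version follows because each orthogonal gadget has a minimum $\alpha$-small covering of the same cardinality as its minimum partition, so $P$ has an $\alpha$-small covering of size $k$ iff it has such a partition.

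The crux -- and the point where Worman's claimed $\alpha_{disk}=\sqrt{18}$ fails -- is the \emph{rigidity} analysis: one must check that with $\alpha_{disk}=\sqrt{13}$ the only $\alpha$-small subpolygons obtainable by cutting a gadget along diagonals are the intended ones. Concretely, the $1\times 4$ wire rectangle has circumscribed diameter $\sqrt{17}>\sqrt{13}$ and is thus excluded, as are the extra subpolygons inside the old un-slimmed clause gadget; deleting the corner bulges kills the remaining spurious corner pieces at bends and offsets; and the slimmed clause geometry is chosen precisely so that its center square has no $\alpha$-small cover other than itself or a terminal piece that witnesses \texttt{True}. I expect the bulk of the work to be this finite case check over the gadgets' subpolygons; once it is done, the global reduction argument above goes through verbatim from the square-fatness case.
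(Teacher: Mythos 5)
Your proposal matches the paper's argument: the paper likewise fixes $\alpha_{disk}=\sqrt{13}$, keeps the variable gadget, removes the corner bulges from the wire bends and offsets, slims down the clause gadget so that its center square is only coverable via a terminal whose wire carries \texttt{True}, and then invokes the same planar 3,4-SAT reduction as in the square-fatness case. The paper leaves the finite gadget check and the piece-counting implicit (deferring to the figures and the phrase ``analog reduction''), so your write-up is, if anything, more explicit than the published sketch.
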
 
Note that this adjusted construction does not work for square-smallness anymore.

\begin{figure}[h]
	
	\centering
	\begin{subfigure}[t]{0.45\textwidth}
		\includegraphics[scale=0.9]{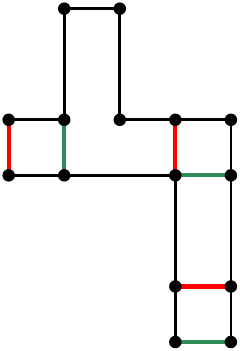}
		\caption{Bending a wire for square-smallness.}
		\label{fig:bend_small_old}
	\end{subfigure}\qquad
	\begin{subfigure}[t]{0.45\textwidth}
		\includegraphics[scale=0.9]{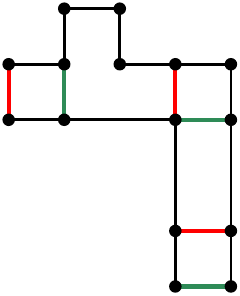}
		\caption{Bending a wire bend for disk-smallness.}
		\label{fig:bend_small_new}
	\end{subfigure}\medskip\\
	\begin{subfigure}[t]{0.45\textwidth}
		\includegraphics[scale=0.9]{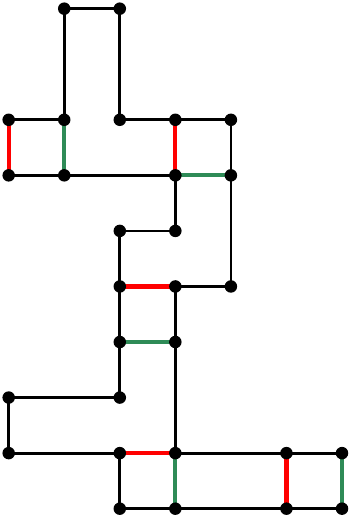}
		\caption{Offsetting a wire for square-smallness.}
		\label{fig:offset_small_old}
	\end{subfigure}
    \qquad
	\begin{subfigure}[t]{0.45\textwidth}
		\includegraphics[scale=0.9]{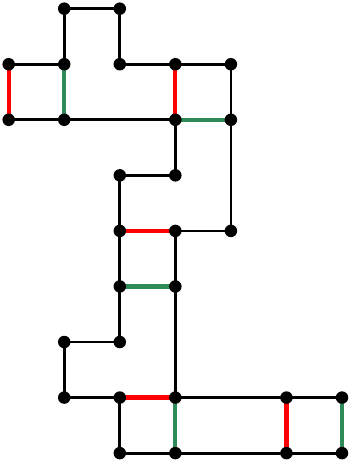}
		\caption{Offsetting a wire for disk-smallness.}
		\label{fig:offset_small_new}
	\end{subfigure}
	\caption{Comparison of wire construction for square-smallness (left) and disk-smallness (right). Bending (top) and offsetting (bottom) a wire that carries \texttt{True} (green edges) or \texttt{False} (red edges).}
	\label{fig:constructwire_small}
\end{figure}

\begin{figure}[h]
	
	\centering
	\begin{subfigure}[t]{0.45\textwidth}
		\includegraphics[scale=0.9]{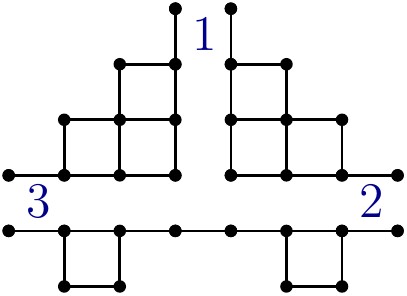}
		\caption{The clause polygon for square-smallness.}
		\label{fig:clause_small_old}
	\end{subfigure}\qquad
	\begin{subfigure}[t]{0.45\textwidth}
		\includegraphics[scale=0.9]{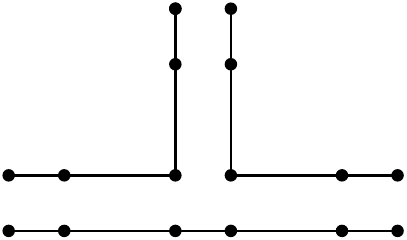}
		\caption{The clause polygon for disk-smallness.}
		\label{fig:clause_small_new}
	\end{subfigure}\medskip\\
	\begin{subfigure}[t]{0.45\textwidth}
		\includegraphics[scale=0.9]{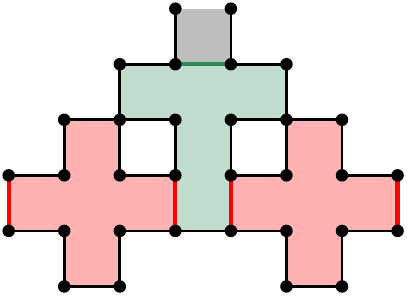}
		\caption{\texttt{True, False, False} for square-smallness.}
		\label{fig:clauseTFF_small_old}
	\end{subfigure}
    \qquad
	\begin{subfigure}[t]{0.45\textwidth}
		\includegraphics[scale=0.9]{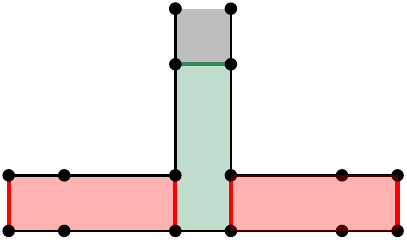}
		\caption{\texttt{True, False, False} for disk-smallness.}
		\label{fig:clauseTFF_small_new}
	\end{subfigure}
	\caption{Comparison of clause construction for square-smallness (left) and disk-smallness (right). Bottom: Partition of clause polygon for one truth assignment transmitted by wires.}
	\label{fig:constructclause_small}
\end{figure}

\end{document}